\newcommand{\blind}{1}
\def\spacingset#1{\renewcommand{\baselinestretch}%
{#1}\small\normalsize} \spacingset{1}
\theoremstyle{plain}
\newtheorem{theorem}{Theorem}
  \theoremstyle{plain}
  \newtheorem{lemma}[theorem]{Lemma}
  \theoremstyle{plain}
  \newtheorem{proposition}[theorem]{Proposition}
\global\long\def\expect{\mathbb{E}}
\global\long\def\var{\mathrm{Var}}
\global\long\def\prob{\mathrm{Pr}}
\global\long\def\real{\mathbb{R}}
\global\long\def\Op{O_{P}}
\global\long\def\manifold{\mathcal{M}}
\global\long\def\op{o_{P}}
\global\long\def\covarop{\mathcal{C}}
\global\long\def\asymplt{\lesssim}
\global\long\def\asympeq{\asymp}
\global\long\def\diffop{\mathrm{d}}
\newcommandx\tangentspace[2][usedefault, addprefix=\global, 1=\manifold]{T_{#2}#1}
\newcommandx\lpnorm[3][usedefault, addprefix=\global, 1=r, 2=]{\|#3\|_{\mathcal{L}^{#1}}^{#2}}
\newcommandx\lp[1][usedefault, addprefix=\global, 1=p]{\mathcal{L}^{#1}}
\global\long\def\transpose{\mathrm{T}}
\global\long\def\ltwo{\mathcal{L}^{2}}
\newcommandx\vfnorm[3][usedefault, addprefix=\global, 1=\mu, 2=]{\|#3\|_{#1}^{#2}}
\newcommandx\vfinnerprod[2][usedefault, addprefix=\global, 1=\mu]{\llangle#2\rrangle_{#1}}
\global\long\def\tdomain{\mathcal{T}}
\newcommandx\opnorm[3][usedefault, addprefix=\global, 1=\mu, 2=]{\vertiii{#3}_{#1}^{#2}}
\newcommandx\fronorm[2][usedefault, addprefix=\global, 1=]{|#2|_{F}^{#1}}
\begin{document}


\if1\blind
{
  \title{\bf Mean and Covariance Estimation for Functional Snippets}
  \date{}
  \author{Zhenhua Lin\thanks{Email: stalz@nus.edu.sg. Department of Statistics and Applied Probability,  National University of Singapore. 
    Research supported by NIH ECHO grant (5UG3OD023313-03).}\hspace{.2cm} 
    and  
    Jane-Ling Wang\thanks{Email: janelwang@ucdavis.edu. Department of Statistics, University of California at Davis. Research supported by NIH ECHO grant (5UG3OD023313-03) and NSF (15-12975 and 19-14917).} 
    }
  \maketitle
} \fi


\begin{abstract}
We consider estimation of mean and  covariance functions of functional snippets, which are short segments of functions possibly observed  irregularly on an individual specific subinterval that is much shorter than the entire study  interval. Estimation of the covariance function for functional snippets is challenging  since information for the far off-diagonal regions of the covariance structure is completely missing. We address this difficulty by decomposing the covariance function into a variance function component and a correlation function component. The variance function can be effectively estimated nonparametrically, while the correlation part is modeled parametrically, possibly with an increasing number of parameters,  to handle the missing information in the far off-diagonal regions. Both  theoretical analysis and numerical simulations suggest that this hybrid  strategy 
is effective. In addition, we propose a new  estimator for the variance of measurement errors and analyze its asymptotic properties. This estimator is required for the estimation of the variance function from noisy measurements.
\end{abstract}

\noindent%
{\it Keywords:}
Functional data analysis, functional principal component analysis, sparse functional data, variance function, correlation function.
\vfill

\spacingset{1.45} 

\section{Introduction}
\label{sec:introduction}
Functional data are random functions on a common domain, e.g.,  an interval  $\tdomain\subset\real$. In reality they can only be observed  on a discrete schedule, possibly intermittently, which leads to an incomplete data problem. Luckily, by now  this problem has largely been resolved \citep{Rice2001, Yao2005a, Li2010, Zhang2016} and  there is  a large literature  
on the analysis of  functional  data.  For a more comprehensive treatment readers are referred to the monographs by \citet{Ramsay2005}, \citet{Ferraty2006},   \citet{Hsing2015} and \cite{Kokoszka2017}, and a review paper  by \cite{Wang2016}.

In this paper, we address a different type of incomplete data, which occurs frequently  in longitudinal studies when subjects enter the study at  random time and  are followed for a short period  within the  domain  $\tdomain=[a,b]\subset\real$. Specifically, we focus on functional data with the following property: each function $X_{i}$
is only observed on a subject-specific interval $O_{i}=[A_{i},B_{i}]\subset[a,b]$, and 
\begin{enumerate}
	[leftmargin=*,labelsep=7mm,label= (S)]
	\item \label{emu:s1} there exists an absolute constant $\delta$ such that $0<\delta<1$ and  $B_i-A_i \leq\delta (b-a)$ for all $i=1,2,\ldots$.
\end{enumerate}
As a result,  the design of support points \citep{Yao2005a} where one has information about the covariance function $\covarop (s, t)$ 
is incomplete in the sense that there are  no design points in the 
off-diagonal region, $\tdomain_\delta^c=\{(s, t) : \ \mid s - t \mid > \delta(b-a), s,t\in[a,b] \}$. This is mathematically characterized by 
\begin{equation}\label{eq:snippet}
\left(\bigcup_{i}[A_{i},B_{i}]^{2}\right)\cap\mathcal{T}_{\delta}^{c}=\emptyset.
\end{equation}
Consequently, local smoothing methods, such as  PACE \citep{Yao2005a}, that are interpolation methods fail to produce a consistent estimate of the covariance function in the off-diagonal region as the problem requires data extrapolation.

An example is the spinal bone mineral density data collected from 423 subjects ranging in  age from 8.8 to 26.2 years \citep{BACHRACH1999}.  The design plot for the covariance function, as shown in Figure \ref{fig:bone-design}, indicates  that all of the design points  fall within  a narrow band  around the diagonal area but the domain of interest $[8.8, 26.2]$ is much larger than this band. The cause of this phenomenon is that each individual trajectory is only recorded in an individual specific subinterval that is much shorter than the span of the study.  For the spinal bone mineral density data, the span (length of interval between the first measurement and the last one) for each individual is no larger than 4.3 years, while the span for the study is about 17 years.  Data with this characteristic, mathematically described by \ref{emu:s1} or \eqref{eq:snippet}, are called  \emph{functional snippets} in this paper, analogous to the longitudinal snippets studied in \cite{Dawson2018}.   As it turns out, functional snippets are quite common in longitudinal studies  \citep{Raudenbush1992,Galbraith2017} and require extrapolation methods to handle.  Usually, this is not an issue for parametric approaches, such as linear mixed-effects models, but  requires a thoughtful plan for non- and semi-parametric approaches.  

Functional  fragments  \citep{Liebl:2013aa, Kraus2015, Kraus2019, Kneip:2019,  Liebl:2019aa},  like functional snippets, are also partially observed functional data and have been studied broadly in the literature. 
However, for data investigated in these works as functional fragments,  the span of a single individual domain $[A_i,B_i]$ can be nearly as large as the span $[a,b]$ of the study, making them distinctively different from functional snippets.  Such data, collectively referred to as  ``nonsnippet functional data'' in this paper, often satisfy the following condition:
\begin{enumerate}
	[leftmargin=*,labelsep=7mm,label= (F)]
	\item\label{P1}  for any $\epsilon\in(0,1)$,  $\lim_n\prob\{B_{i_n}-A_{i_n}>(1-\epsilon)(b-a)\}>0$ for a strictly increasing sequence $\{i_n\}_{n=1}^\infty$.
\end{enumerate}
For instance, \cite{Kneip:2019} assumed that $\prob([A_i,B_i]^2=[a,b]^2)>0$, which implies that design points and local information are still available in the off-diagonal region $\tdomain_\delta^c$. In other words, for non-snippet functional data and for each $(s,t)\in[a,b]^{2}$, one has $\prob\{(s,t)\in\bigcup_{i=1}^{n}[A_i,B_i]^2\}>0$
for sufficiently large $n$, contrasting with \eqref{eq:snippet} for functional snippets. Other related works by  \cite{Gellar:2014aa,Goldberg:2014aa,Gromenko:2017aa,Stefanucci2018} on partially observed functional data, although do not explicitly discuss the design, require condition \ref{P1} for their proposed methodologies and theory. All of them can be handled with a proper interpolation method, which is fundamentally different from the extrapolation methods needed for functional snippets.


The analysis of functional snippets is  more challenging than non-snippet functional data, since   information in the far off-diagonal regions of the covariance structure is completely missing for functional snippets according to \eqref{eq:snippet}. \citet{Delaigle2016} addressed  this challenge  by assuming that the underlying  functional data  are Markov processes, which is only valid at the discrete level, as pointed out by \cite{Descary2019}.   
\citet{Zhang2018} and \citet{Descary2019} used  matrix completion methods to handle functional snippets, but their approaches   require modifications to handle longitudinally recorded
snippets that are sampled at random design points, and their theory does not  cover random designs. \cite{Delaigle2019} proposed to numerically extrapolate an  estimate, such as PACE \citep{Yao2005a}, from the diagonal region to the entire domain via basis expansion.   In this paper, we propose a divide-and-conquer strategy to analyze (longitudinal) functional snippets with a focus on the mean and covariance estimation.  Once the covariance function has been estimated, functional principal component analysis can be performed through the spectral decomposition of the covariance operator.  
\begin{figure}[t]
\label{fig:bone-design}
\begin{center}
\includegraphics[width=3in]{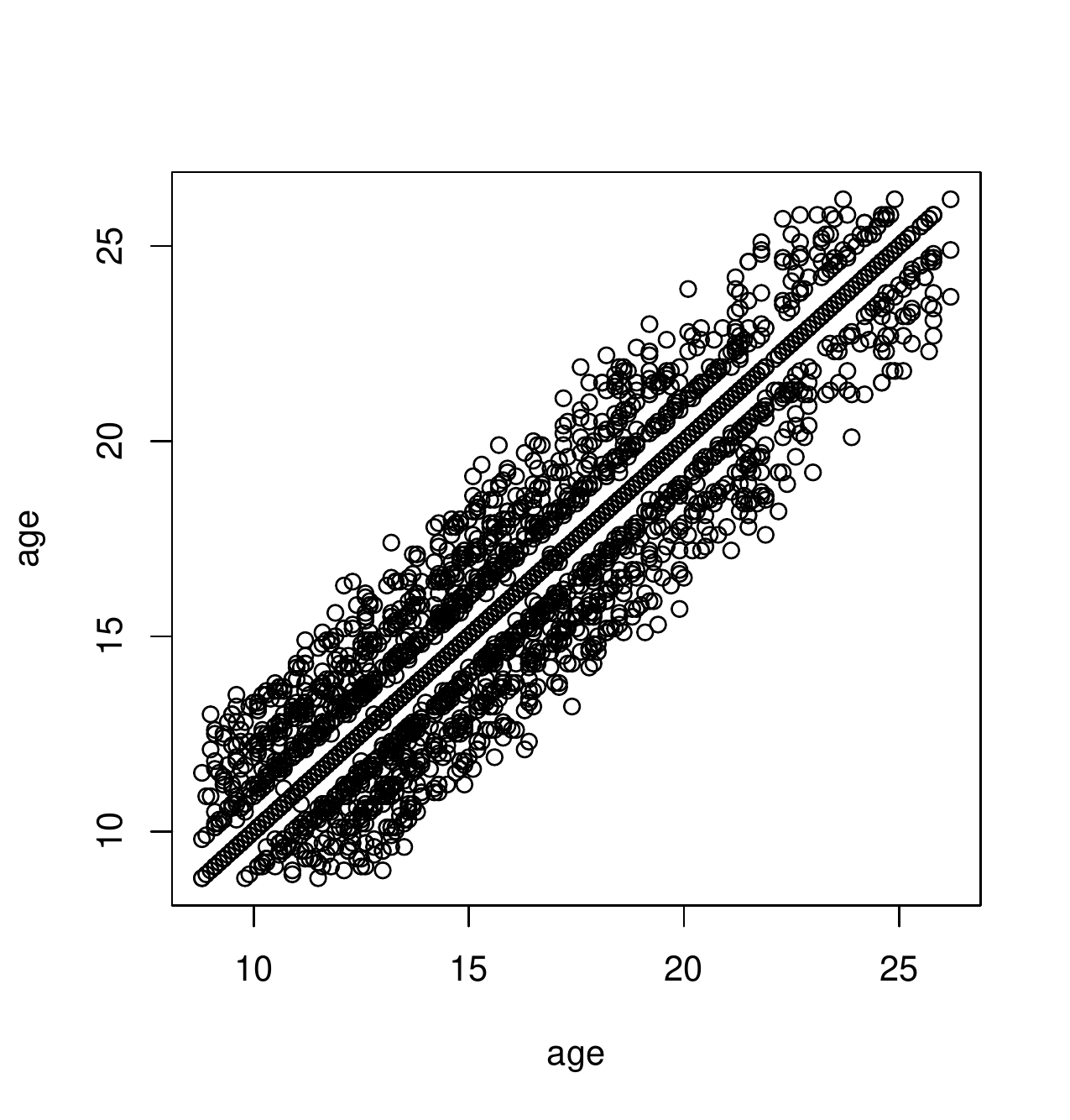}
\end{center}
\vspace{-1cm}
\caption{The design of covariance function from spinal bone mineral density data.}
\end{figure}

Specifically, we divide
the covariance function into two components, the variance function
and the correlation function. The former can be estimated via classic kernel smoothing, while the latter is modeled parametrically with a potentially diverging number of parameters. The principle behind this idea is to nonparametrically estimate the unknown components for which sufficient  information is available  while parameterizing the component with missing pieces. Since the correlation structure is usually much more structured than the covariance surface and it is possible to estimate the correlation structure nonparametrically within the diagonal band, a parametric correlation model  can be selected from  candidate models in existing literature and this usually works quite well to fit the unknown correlation structure. 

Compared to
the aforementioned works, our proposal enjoys at least two advantages. First, it can be applied to all types of designs, either sparsely/densely or regularly/irregularly observed snippets. 
Second, our approach is simple thanks to  the parametric structure of the correlation structure, and yet powerful due to the potential to accommodate  growing dimension of parameters and nonparametric variance component.  We stress that, our semi-parametric and divide-and-conquer strategy is fundamentally different from the penalized basis expansion approach that is adopted in the recent paper by \cite{Lin2019a} where the covariance function is represented by an analytic basis and the basis coefficients are estimated via penalized least squares. Numerical comparison of these two methods is  provided in Section \ref{sec:simulation}.

This divide-and-conquer approach has been explored in \cite{Fan2007} and \cite{Fan2008} to model the covariance structure of time-varying random noise in a varying-coefficient partially linear model.  We demonstrate here that a similar strategy can overcome the challenge  of the missing data issue  in functional snippets and further allow  the dimension of the correlation function to grow to infinity.   In addition, we take into account the measurement error in the observed data,  which is an important component in functional data analysis but is of less interest in a partially linear model and thus not considered  in \cite{Fan2007} and \cite{Fan2008}. The presence of measurement errors complicates the estimation of the variance function, as they are  entangled together along the diagonal direction of the covariance surface. Consequently, the estimation procedure for the variance function in \cite{Fan2007} and \cite{Fan2008} does not apply. 
While it is possible to estimate  the error variance using the approach in \citet{Yao2005a} and \citet{Liu2009},  these methods require a pilot estimate of the covariance function in the diagonal area, which involves two-dimensional smoothing, and thus are not efficient.  A key contribution of this paper is  a new estimator for the error variance in Section \ref{sec:sigma0} that is simple and easy to compute.  It improves upon the estimators in \citet{Yao2005a} and \citet{Liu2009}, as demonstrated through theoretical analysis and numerical studies; see Section \ref{sec:theory} and \ref{sec:simulation} for details.


\section{Mean and Covariance Function Estimation}\label{sec:method}

Let $X$ be a second-order random process defined on an interval $\tdomain\subset\real$
with mean function $\mu(t)= \expect X(t), $ and covariance function $\covarop(s,t)  = \mathrm{cov} (X(s), X(t)) $.
Without loss of generality, we  assume $\tdomain=[0,1]$ in the
sequel. 

Suppose $\{X_{1},\ldots,X_{n}\}$ is an independent random sample of $X$,
where $n$ is the sample size. In practice, functional data are rarely fully observed. Instead, they are often noisily recorded at some discrete points on $\tdomain$. To accommodate this practice, we  assume that each $X_i$ is only measured at $m_i$ points $T_{i1},\ldots,T_{im_i}$, and the observed data are $Y_{ij}=X_{i}(T_{ij})+\varepsilon_{ij}$ for $j=1,\ldots,m_i$, where  $\varepsilon_{ij}$ represents the  homoscedastic random
noise such that $\expect\varepsilon_{ij}=0$ and $\expect\varepsilon_{ij}^{2}=\sigma_{0}^{2}$. This homoscedasticity assumption  can be relaxed to accommodate heteroscedastic noise; see Section \ref{sec:sigma0} for details. 
To further elaborate the functional snippets characterized by \ref{emu:s1}, we assume  that the $i$th subject is only available to be studied between time $O_i-\delta/2$ and $O_i+\delta/2$, where the variable $O_i\in[\delta/2,1-\delta/2]$, called reference time in this paper, is specific to each subject and is modeled as identically and independently distributed (i.i.d.) random variables. We then assume that, $T_{i1},\ldots,T_{im_{i}}$ are i.i.d., conditional on $O_i$. These assumptions reflect the reality of many data collection processes  when subjects enter a study at  random time $O_i-\delta/2$ and are followed for a fixed period of time.  Such a sampling plan,  termed accelerated longitudinal design, has the advantage to expand the  time range of interest  in a short period of time as compared to a single cohort longitudinal design study. 

\subsection{Mean Function}

Even though only functional snippets are observed rather than a full curve, smoothing approaches such as \citet{Yao2005a} can be applied to estimate the mean function $\mu$, since for each $t$, there is positive probability that some design points fall into a small neighborhood of $t$. Here, we adopt a ridged version of the local linear
smoothing method in \citet{Zhang2016}, as follows. 

Let $K$ be a kernel function and $h_{\mu}$
a bandwidth, and define $K_{h_{\mu}}(u)=h_{\mu}^{-1}K(u/h_{\mu})$. The non-ridged local linear estimate of $\mu$ is given by $\tilde{\mu}(t)=\hat{b}_{0}$
with
\[
(\hat{b}_{0},\hat{b}_{1})=\underset{(b_{0},b_{1})\in\real^{2}}{\arg\min}\sum_{i=1}^{n}w_{i}\sum_{j=1}^{m_{i}}K_{h_{\mu}}(T_{ij}-t)\{Y_{ij}-b_{0}-b_{1}(T_{ij}-t)\}^{2},
\]
where $w_{i}\geq0$ are weight such that $\sum_{i=1}^{n}m_{i}w_{i}=1$. For the optimal choice of weight, readers
are referred to \citet{Zhang2018a}. It can be shown that
$\tilde{\mu}(t)=(R_{0}S_{2}-R_{1}S_{1})/(S_{0}S_{2}-S_{1}^{2})$,
where 
\begin{align*}
S_{r} & =\sum_{i=1}^{n}w_{i}\sum_{j=1}^{m_{i}}K_{h_{\mu}}(T_{ij}-t)\{(T_{ij}-t)/h_{\mu}\}^{r},\\
R_{r} & =\sum_{i=1}^{n}w_{i}\sum_{j=1}^{m_{i}}K_{h_{\mu}}(T_{ij}-t)\{(T_{ij}-t)/h_{\mu}\}^{r}Y_{ij}.
\end{align*}
Although $\tilde{\mu}$ behaves well most of the time, for a finite sample, there is positive probability that $S_0S_2-S_1^2=0$, hence  $\tilde{\mu}$ may become undefined. This minor issue can be addressed by ridging, a regularization technique used by \cite{Fan1993} with details in \cite{Seifert1996} and \cite{Hall1997}. The basic idea is to add a small positive constant to the denominator of $\tilde{\mu}$ when $S_0S_2-S_1^2$ falls below a threshold. More specifically, the ridged version of $\tilde{\mu}(t)$ is given by 
\begin{equation}\label{eq:ridged-mean}
\hat{\mu}(t)=\frac{R_{0}S_{2}-R_{1}S_{1}}{S_{0}S_{2}-S_{1}^{2}+\Delta1_{\{|S_{0}S_{2}-S_{1}^{2}|<\Delta\}}},
\end{equation}
where $\Delta$ is a sufficiently small constant depending on $n$
and $m_1,\ldots,m_n$.  A convenient choice here is $\Delta=(nm)^{-2}$, where $m=n^{-1}\sum_{i=1}^nm_i$.

The tuning parameter $h_\mu$ could be selected via the following $\kappa$-fold cross-validation procedure. Let $\kappa$ be a positive integer, e.g., $\kappa=5$, and $\{\mathcal{P}_1,\ldots,\mathcal{P}_\kappa\}$ be a roughly even random partition of the set $\{1,\ldots,n\}$. For a set $\mathcal{H}$ of candidate values for $h_\mu$, we choose one from it such that the following cross-validation error
\begin{equation}\label{eq:mean-cv}
\mathrm{CV}(h)=\sum_{k=1}^\kappa\sum_{i\in \mathcal{P}_k}\sum_{j=1}^{m_i}\{Y_{ij}-\hat{\mu}_{h,-k}(T_{ij})\}^2
\end{equation}
is minimized, where $\hat{\mu}_{h,-k}$ is the estimator in \eqref{eq:ridged-mean} with $h_\mu=h$ and  subjects in $\mathcal{P}_k$ excluded.

\subsection{Covariance Function}

Estimation of the covariance function  $\covarop(s,t)$  for functional snippets is considerably more challenging. As we have pointed out in Section \ref{sec:introduction}, local information in the far off-diagonal region, $| s-t | >  \delta $,  is completely missing. 
To tackle this  challenge, we first observe that the covariance function can be decomposed into two parts, a variance function and a correlation structure, i.e., $\covarop(s,t)=\sigma_X(s)\sigma_X(t)\rho(s,t)$,
where $\sigma_X^2(\cdot)$ is the variance function of $X$, or more precisely, $\sigma_X^2(t)=\expect\{X(t)-\mu(t)\}^2$, and $\rho(\cdot,\cdot)$ is the correlation function. Like the mean function $\mu$, the variance function can be well estimated via local linear smoothing even in the case of functional snippets. The real difficulty stems from the estimation of the correlation structure, which we propose to model parametrically. At  first glance, a parametric model might be restrictive. However, with a nonparametric variance component and a large number of parameters, the model will  often  still  be sufficiently flexible to capture the covariance structure of the data. Indeed, in our simulation studies that are presented in Section \ref{sec:simulation}, we demonstrate that even with a single parameter, the proposed model often  yields good performance when sample size is limited. As an additional flexibility, our parametric model does not require the low-rank assumption 
and hence is able to model truly infinitely-dimensional functional data.  This trade of the low-rank assumption with  the proposed parametric assumption seems worthwhile, especially because we allow the  dimension of the parameters to increase with the sample size.  The increasing dimension of the parameter essentially puts the methodology in the nonparametric paradigm.

To estimate $\sigma_X^2(\cdot)$, we first note that  the PACE  method in \citet{Yao2005a} can still be used to estimate $\covarop(s,t)$ on the  band $\mathcal{T}_{\delta}^2=\{(s,t)\in\tdomain\times\tdomain:\,|s-t|<\delta\}$ that includes the diagonal, although not on the full domain $\tdomain\times \tdomain$. Since $\sigma_X^2(t)=\covarop(t,t)$, the PACE estimate $\tilde{\covarop}$ for $\covarop$ on the diagonal gives rise to an estimate of $\sigma_X^2(t)$. However, this method requires two-dimensional smoothing, which is cumbersome and computationally less efficient. In addition, it has the convergence rate of a two-dimensional smoother, which is suboptimal for a  target $\sigma_X^2(t)$ that is a one-dimensional function.  Here we propose a simpler approach that only requires one-dimensional smoothing, based on the observation that the quantity $\varsigma^2(t)\equiv\expect \{Y(t)-\mu(t)\}^2=\sigma_X^2(t)+\sigma_0^2$ 
can be estimated by local linear smoothing on the observations $\{Y_{ij}-\hat{\mu}(T_{ij})\}^2$.  More specifically, the non-ridged local linear estimate of $\varsigma^2(t)$,  denoted by $\tilde{\varsigma}^2(t)$, is  $\hat{b}_0$ with
\[
(\hat{b}_{0},\hat{b}_{1})=\underset{(b_{0},b_{1})\in\real^{2}}{\arg\min}\sum_{i=1}^{n}w_{i}\sum_{j=1}^{m_{i}}K_{h_{\sigma}}(T_{ij}-t)[\{Y_{ij}-\hat{\mu}(T_{ij})\}^2-b_{0}-b_{1}(T_{ij}-t)]^{2},
\]
where $h_\sigma$ is the bandwidth to be selected by
 a cross-validation procedure similar to \eqref{eq:mean-cv}.
As with the ridged estimate of the mean function in \eqref{eq:ridged-mean}, to circumvent the positive probability of being undefined for $\tilde{\varsigma}^2$, we adopt the ridged version of $\tilde{\varsigma}^2$ as the estimate for $\varsigma^2$, denoted by $\hat{\varsigma}^2$. Then our estimate of $\sigma_X^2(t)$ is  $\hat{\sigma}_X^2(t)=\hat{\varsigma}^2(t)-\hat{\sigma}^2_0$, where $\hat{\sigma}^2_0$ is a new estimate of $\sigma_0^2$,  to be defined in the next section, that has a convergence rate of a one-dimensional smoother.  Because  $\hat{\varsigma}^2(t)$  also has a one-dimensional convergence, the resulting estimate of  $\hat{\sigma}_X^2(t)$ has a one-dimensional convergence rate.

For the correlation function $\rho$, we assume that $\rho$ is indexed by a $d_n$-dimensional vector of parameters, denoted by $\theta\in \real^{d_n}$. Here, the dimension of parameters is allowed to grow with the sample size at a certain rate; see Section \ref{sec:theory} for details. Some popular parametric families for correlation function are listed below. 
\begin{enumerate}
\item Power exponential:
\[
\rho_{\theta}(s,t)=\exp\left\{ -\frac{|s-t|^{\theta_{1}}}{\theta_{2}^{\theta_{1}}}\right\} ,\quad0<\theta_{1}\leq2,\:\theta_{2}>0.
\]
 
\item Rational quadratic (Cauchy):
\[
\rho_{\theta}(s,t)=\left\{ 1+\frac{|s-t|^{2}}{\theta_{2}^{2}}\right\} ^{-\theta_{1}},\quad\theta_{1},\theta_{2}>0.
\]
\item Mat\'ern:
\begin{equation}\label{eq:matern}
\rho_{\theta}(s,t)=\frac{1}{\Gamma(\theta_1)2^{\theta_1-1}}\left(\sqrt{2\theta_1}\frac{|s-t|}{\theta_2}\right)^{\theta_1}B_{\theta_1}\left(\sqrt{2\theta_1}\frac{|s-t|}{\theta_2}\right),\quad\theta_1,\theta_2>0,
\end{equation}
with $B_{\theta}(\cdot)$ being the modified Bessel function of
the second kind of order $\theta$.
\end{enumerate}
Note that if $\rho_1,\ldots,\rho_p$ are correlation functions, then $\sum_{k=1}^p v_k\rho_k$ is also a correlation function if $\sum_{k=1}^p v_k=1$ and $v_k\geq 0$ for all $k$. Therefore, a fairly flexible class of correlation functions can be constructed from several relatively simple classes by  this convex  combination. We point out here that, even when one adopts a stationary correlation function, the resulting covariance  can be non-stationary due to a nonparametric and hence often non-stationary variance component.

Given the estimate $\hat\sigma^2_X(t)$, the parameter $\theta$ can be effectively estimated using the following least squares criterion, i.e., $\hat{\theta}=\underset{\theta}{\arg\min}\,\,\hat{Q}_n(\theta)$ with 
\[
\hat{Q}_n(\theta)=\sum_{i=1}^{n}\frac{1}{m_i(m_i-1)}\sum_{1\leq j\neq l\leq m_{i}}\{\hat{\sigma}_X(T_{ij})\hat{\sigma}_X(T_{il})\rho_{\theta}(T_{ij},T_{il})-C_{ijl}\}^{2},
\]
where $C_{ijl}=\{Y_{ij}-\hat{\mu}(T_{ij})\}\{Y_{il}-\hat{\mu}(T_{il})\}$ is the raw covariance of subject $i$  at two different measurement times, $T_{ij}$  and  $T_{il}$. 

\section{Estimation of Noise Variance}\label{sec:sigma0}

The estimation of $\sigma_0^2$ received relatively little attention in the literature. For sparse functional data, the PACE estimator  $2|\tdomain|^{-1}\int_{\tdomain}\{\hat{\varsigma}(t)-\hat{\covarop}(t,t)\diffop t\}$ proposed in  \citet{Yao2005a} is a popular option. However, the PACE estimator can be negative in some cases. 
\cite{Liu2009} refined this PACE estimator by first fitting the observed data using the PACE estimator and then estimating $\sigma_0^2$  by cross-validated residual sum of squares; see appendix A.1 of \cite{Liu2009} for details. These methods require an estimate of the covariance function, which we do not have here before we obtain an estimate of $\sigma_0^2$. Moreover, the estimate $\hat{\covarop}(t,t)$  in both methods is obtained by two-dimensional local linear smoothing as detailed in \cite{Yao2005a}, which is computationally costly and leads to a slower (two-dimensional) convergence rate of these estimators.
To resolve this conundrum, we propose the following new  estimator that does not require estimation of the covariance function or any other parameters such as the mean function. 

For a bandwidth $h_0>0$, define the quantities
\begin{align*}
A_{0} & =\expect[\{\covarop(T_{1},T_{1})+\mu(T_{1})\mu(T_{1})+\sigma_{0}^{2}\}1_{|T_{1}-T_{2}|<h_0}],\\
A_{1} & =\expect[\{\covarop(T_{1},T_{1})+\mu(T_{1})\mu(T_{1})\}1_{|T_{1}-T_{2}|<h_0}],
\end{align*}
and $$B=\expect1_{|T_{1}-T_{2}|<h_0},$$ where  $T_{1}$ and $T_{2}$ denote two design points from the same generic subject. From the above definition, we immediately see that $A_0=A_{1}+B\sigma_{0}^{2}$. Also, these quantities seem easy to estimate. For example, $A_0$ and $B$ can be straightforwardly estimated respectively by 
\begin{equation}\label{eq:A0-est}
\hat{A}_{0}=\frac{1}{n}\sum_{i=1}^{n}\frac{1}{m_{i}(m_{i}-1)}\sum_{j\neq l}Y_{ij}^{2}1_{|T_{ij}-T_{il}|<h_0}
\end{equation}
and
\begin{equation}\label{eq:B-est}
\hat{B}=\frac{1}{n}\sum_{i=1}^{n}\frac{1}{m_{i}(m_{i}-1)}\sum_{j\neq l}1_{|T_{ij}-T_{il}|<h_0}.
\end{equation}
This motivates us to  estimate $\sigma_0^2$ via estimation of $A_0$, $A_1$ and $B$. 

It remains to estimate $A_1$, which cannot be estimated using information along the diagonal only, due to the presence of random noise. Instead, we shall explore the smoothness of the covariance function and observe that if $T_1$ is close to $T_2$, say $|T_1-T_2|<h_0$, then $\covarop(T_1,T_1)\approx \covarop(T_1,T_2)$ and 
$$A_1\approx A_{2} =\expect[\{\covarop(T_{1},T_{2})+\mu(T_{1})\mu(T_{2})\}1_{|T_{1}-T_{2}|<h_0}].$$
Indeed, we show in Lemma \ref{lem:A0-A1-A2-B} that $A_{1}=A_{2}+O(h_0^{3})$. Therefore, it is sensible to use $A_2$ as a surrogate of $A_1$. The former can be effectively estimated by 
\begin{equation}\label{eq:A1-est}
\hat{A}_{2}=\frac{1}{n}\sum_{i=1}^{n}\frac{1}{m_{i}(m_{i}-1)}\sum_{j\neq l}Y_{ij}Y_{il}1_{|T_{ij}-T_{il}|<h_0},
\end{equation}
and we set $\hat{A}_1=\hat{A}_2$. 
Finally, the estimate of $\sigma_{0}^{2}$ is given by
\begin{equation}\label{eq:simga0}
\hat{\sigma}_{0}^{2}=(\hat{A}_{0}-\hat{A}_{1})/\hat{B}.
\end{equation}

To choose $h_0$, motivated by the convergence rate stated in Theorem \ref{thm:sigma0} of the next section, we suggest the following empirical rule,  $h_0=0.29\hat{\delta}\|\hat\varsigma\|_2(nm^2)^{-1/5}$,  for sparse functional snippets, where $\hat{\delta}=\max_{1\leq i\leq n}\max_{1\leq j,l \leq m_i}|T_{ij}-T_{il}|$ acts as an estimate for $\delta$, $m=n^{-1}\sum_{i=1}^n m_i$ represents the average number of measurements per curve, $\hat\varsigma^2(t)$ is the estimate of $\varsigma^2(t)=\sigma_X^2(t)+\sigma_{0}^2$ defined in Section \ref{sec:method}, and $\|\hat\varsigma\|_2^2=\int \hat\varsigma^2(t)\diffop t$ represents the overall variability of the data. The coefficient $0.29$ is determined by a method described in the appendix. If this rule yields a value of $h_0$ that makes the neighborhood $\mathcal{N}(h_0)=\{(T_{ij},T_{il}):|T_{ij}-T_{il}|<h_0,i=1,\ldots,n,1\leq j\neq l\leq m_i\}$ empty or contain too few points, then we recommend to choose the minimal value of $h_0$ such that $\mathcal{N}(h_0)$ contains at least $10^{-1} \sum_{i=1}^n m_i(m_i-1)$ points. In this way, we ensure that a substantial fraction of the observed data are used for estimation of the variance $\sigma_0^2$. This rule is found to be very effective in practice; see Section \ref{sec:simulation} for its numerical performance.

Compared to \citet{Yao2005a} and \cite{Liu2009}, the proposed estimate (\ref{eq:simga0}) is simple and easy to compute. Indeed, it can be computed much faster since it does not require the costly computation of $\hat{\covarop}$. More importantly, the ingredients $\hat{A}_0$, $\hat{A}_1=\hat{A}_2$ and $\hat{B}$ for our estimator are obtained by one-dimensional smoothing, with the term $1_{|T_{ij}-T_{il}|<h_0}$ in \eqref{eq:A0-est}--\eqref{eq:A1-est} acting as a local constant smoother. Consequently, as we show in Section \ref{sec:theory}, our estimator enjoys an asymptotic convergence rate that is faster than the one from a two-dimensional local linear smoother. In addition, the proposed estimate is always nonnegative, in contrast to the one in \citet{Yao2005a}. This is seen by the following derivation:
\begin{align}
\hat{A}_{1} & =\frac{1}{n}\sum_{i=1}^{n}\frac{1}{m_{i}(m_{i}-1)}\sum_{j\neq l}Y_{ij}Y_{il}1_{|T_{ij}-T_{il}|<h_0}\leq\frac{1}{n}\sum_{i=1}^{n}\frac{1}{m_{i}(m_{i}-1)}\sum_{j\neq l}\frac{Y_{ij}^{2}+Y_{il}^{2}}{2}1_{|T_{ij}-T_{il}|<h_0} \nonumber\\
 & =\frac{1}{n}\sum_{i=1}^{n}\frac{1}{m_{i}(m_{i}-1)}\sum_{j\neq l}Y_{ij}^{2}1_{|T_{ij}-T_{il}|<h_0} =\hat{A}_{0}. \label{eq:sigma0-positive}
\end{align}

\noindent \textbf{Remark:} The above discussion assumes that the noise is homoscedastic, i.e., its variance is identical for all $t$. As an extension, it is possible to modify the above procedure to account for heteroscedastic noise, as follows. With intuition and rationale similar to the homoscedastic case, we define 
\begin{align*}
\hat{A}_{0}(t) & =\frac{1}{n}\sum_{i=1}^{n}\frac{1}{m_{i}(m_{i}-1)}\sum_{j\neq l}Y_{ij}^{2}1_{|T_{ij}-t|<h_0}1_{|T_{il}-t|<h_0},\\
\hat{A}_{1}(t) & =\frac{1}{n}\sum_{i=1}^{n}\frac{1}{m_{i}(m_{i}-1)}\sum_{j\neq l}Y_{ij}Y_{il}1_{|T_{ij}-t|<h_0}1_{|T_{il}-t|<h_0},\\
\hat{B}(t) & =\frac{1}{n}\sum_{i=1}^{n}\frac{1}{m_{i}(m_{i}-1)}\sum_{j\neq l}1_{|T_{ij}-t|<h_0}1_{|T_{il}-t|<h_0},
\end{align*}
and let 
\begin{equation*}
\hat{\sigma}_{0}^{2}(t)=\{\hat{A}_{0}(t)-\hat{A}_{1}(t)\}/\hat{B}(t)
\end{equation*}
be the estimate of $\sigma_0^2(t)$ which is the variance of the noise at $t\in\tdomain$. Like the derivation in \eqref{eq:sigma0-positive}, one can also show that this estimator is nonnegative.

\section{Theoretical Properties}\label{sec:theory}
For clarity of exposition, we  assume throughout this section that all the $m_i$ have the same rate $m$, i.e., $m_i = m$,   where the sampling rate $m$  may tend to infinity.   We emphasize that parallel asymptotic results can be derived without this assumption by replacing $m$ with $ \frac{1}{n}\sum_{i=1}^n m_i$. Note that the theory to be presented below applies to both the case that $m$ is bounded by a constant, i.e., $m\leq m_0$ for some $m_0<\infty$, and the case that $m$ diverges to $\infty$ as $n\rightarrow\infty$. 

We assume that the reference time $O_i$ is identically and independently distributed (i.i.d.) sampled from a density $f_O$, and  $T_{i1},\ldots,T_{im_{i}}$ are i.i.d., conditional on $O_i$.  The i.i.d. assumptions can be relaxed to accommodate heterogeneous distributions and weak dependence, at the cost of much more complicated analysis and heavy technicalities. As such relaxation does not provide further insight into our problem, we decide not to pursue it in the paper.  The following conditions about $O_i$ and other quantities are needed for our theoretical development.
\begin{enumerate}
[leftmargin=*,labelsep=7mm,label= (A\arabic*)]
\setcounter{enumi}{0}%
\item\label{cond:A1}The density $f_O$ of each $O_i$ satisfies $f_{O}(u)>0$ for all $u\in[\delta/2,1-\delta/2]$, and the conditional density $f_{T|O}$ of $T_{ij}$ given $O_i$ satisfies $f_{T|O}(t|u)=f_0(t-u+\delta/2)>0$ for a fixed function $f_0$ and 
for all $u\in[\delta/2,1-\delta/2]$ and $t\in[u-\delta/2,u+\delta/2]$.
Also, the derivative $\frac{\diffop}{\diffop t}f_0(t)$ is Lipschitz continuous on $[0,\delta]$. 
\item\label{cond:A2}The second derivatives of $\mu$ and $\covarop$ 
are continuous and hence bounded on $\tdomain$ and $\tdomain\times \tdomain$, respectively.
\item\label{cond:A3}$\expect\|X\|^{4}<\infty$ and $\expect\varepsilon^{4}<\infty$.
\end{enumerate}
In the above, the condition \ref{cond:A1} characterizes the design points for functional snippets and can be relaxed, while the regularity conditions \ref{cond:A2} and \ref{cond:A3} are  common in the literature, e.g., in \citet{Zhang2016}. According to \cite{Scheuerer2010}, \ref{cond:A2} also implies that the sample paths of $X$ are continuously differentiable and hence Lipschitz continuous almost surely. Let $L_X$ be the best Lipschitz constant of $X$, i.e., $L_X=\inf\{C\in\real:\,|X(s)-X(t)|\leq C|s-t|\text{ for all }s,t\in\tdomain\}$. We will see shortly that a moment condition on $L_X$ allows us to derive a rather sharp bound for the convergence rate of $\hat{\sigma}_0^2$.
For the bandwidth $h_0$, we require the following condition:
\begin{enumerate}
[leftmargin=*,labelsep=7mm,label= (H\arabic*)]
\setcounter{enumi}{0}%
\item\label{cond:H1}$h_0\rightarrow0$ and $nm^2h_0\rightarrow\infty$.
\end{enumerate}

The following result gives the asymptotic rate of the estimator $\hat{\sigma}_0^2$. The proof is straightforward once we have Lemma \ref{lem:A0-A1-A2-B}, which is given in the appendix.
\begin{theorem}\label{thm:sigma0}Assume the conditions \ref{cond:A1}--\ref{cond:A3} and \ref{cond:H1} hold.
\begin{enumerate}
[leftmargin=*,labelsep=2mm,label= \textup{(\alph*)}]
\item\label{thm:sigma0:a} $(\hat{\sigma}_{0}^{2}-\sigma_{0}^{2})^{2}=\Op(h_0^{4}+n^{-1}+n^{-1}m^{-2}h_0^{-1})$. With the optimal choice $h_0\asympeq (nm^2)^{-1/5}$, $(\hat{\sigma}_{0}^{2}-\sigma_{0}^{2})^{2}=\Op(n^{-4/5}m^{-8/5}+n^{-1})$.
\item\label{thm:sigma0:b} If in addition $\expect L_X^4<\infty$, then $(\hat{\sigma}_{0}^{2}-\sigma_{0}^{2})^{2}=\Op(h_0^{4}+n^{-1}m^{-1}+n^{-1}m^{-2}h_0^{-1})$. With the optimal choice $h_0\asympeq (nm^2)^{-1/5}$, $(\hat{\sigma}_{0}^{2}-\sigma_{0}^{2})^{2}=\Op(n^{-4/5}m^{-8/5}+n^{-1}m^{-1})$.
\end{enumerate}
\end{theorem}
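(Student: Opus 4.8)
I would start from the exact identity $\sigma_0^2=(A_0-A_1)/B$, which holds because $A_0=A_1+B\sigma_0^2$, and from the form $\hat\sigma_0^2=(\hat A_0-\hat A_2)/\hat B$ of \eqref{eq:simga0} (recall $\hat A_1=\hat A_2$). It is convenient to observe that
\[
\hat A_0-\hat A_2=\frac1n\sum_{i=1}^n\frac1{m_i(m_i-1)}\sum_{j\neq l}\frac{(Y_{ij}-Y_{il})^2}{2}\,1_{|T_{ij}-T_{il}|<h_0},
\]
a manifestly nonnegative difference-based statistic with $\expect(\hat A_0-\hat A_2)=A_0-A_2$. Then I split
\[
\hat\sigma_0^2-\sigma_0^2=\frac{(\hat A_0-\hat A_2)-(A_0-A_2)}{\hat B}+\frac{A_1-A_2}{\hat B}+\sigma_0^2\,\frac{B-\hat B}{\hat B},
\]
call the three summands $\mathrm{(I)},\mathrm{(II)},\mathrm{(III)}$, and bound each after pinning down the size of the random denominator $\hat B$.

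First I would control $B$ and $\hat B$. By \ref{cond:A1} and the translation structure $f_{T\mid O}(t\mid u)=f_0(t-u+\delta/2)$, a change of variables shows $B=\expect 1_{|T_1-T_2|<h_0}\asympeq h_0$; this and the variance bound $\var(\hat B)=O(n^{-1}m^{-1}h_0^2+n^{-1}m^{-2}h_0)$ are part of Lemma \ref{lem:A0-A1-A2-B}. Together with \ref{cond:H1} they give $\hat B/B\to1$ in probability, hence $\hat B^{-1}=\Op(h_0^{-1})$ and $\expect\{(B-\hat B)/\hat B\}^2=O(n^{-1}m^{-1}+n^{-1}m^{-2}h_0^{-1})$, which disposes of $\mathrm{(III)}$. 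They also show $\mathrm{(II)}=\Op(h_0^{-1}|A_1-A_2|)=\Op(h_0^2)$ using the bias bound $A_1=A_2+O(h_0^3)$ from Lemma \ref{lem:A0-A1-A2-B}; squaring yields the $h_0^4$ term.

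The substance is $\var(\hat A_0-\hat A_2)=n^{-1}\var(D_1)$ with $D_1=\{m(m-1)\}^{-1}\sum_{j\neq l}\tfrac12(Y_{1j}-Y_{1l})^2 1_{|T_{1j}-T_{1l}|<h_0}$, which I would evaluate by the standard split of $\var(D_1)$ according to the number of indices shared by two pairs $\{j,l\},\{j',l'\}$. The $O(m^2)$ self-pairs each contribute $O(h_0)$ (a single indicator), producing the $n^{-1}m^{-2}h_0^{-1}$ term after dividing by $B^2\asympeq h_0^2$; the $O(m^3)$ pairs sharing one index each contribute $O(h_0^2)$ — from the product of two one-sided indicators and, crucially, from the fluctuation of the shared noise term $\varepsilon_{1j}^2$, which no smoothness removes — producing the $n^{-1}m^{-1}$ term; and the $O(m^4)$ disjoint pairs each contribute $\var(\phi)$ with $\phi:=\expect\{\tfrac12(Y_{11}-Y_{12})^2 1_{|T_{11}-T_{12}|<h_0}\mid X_1,O_1\}$. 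Here the noise part $\sigma_0^2\expect\{1_{|T_{11}-T_{12}|<h_0}\mid O_1\}$ is constant in $O_1$ by the translation invariance of $f_{T\mid O}$, so $\var(\phi)=\var\!\big(\tfrac12\expect\{(X_1(T_{11})-X_1(T_{12}))^2 1_{|T_{11}-T_{12}|<h_0}\mid X_1,O_1\}\big)$. Under \ref{cond:A2}--\ref{cond:A3} alone this is $O(h_0^2)$ (bound $(X_1(T_{11})-X_1(T_{12}))^2\le 2X_1(T_{11})^2+2X_1(T_{12})^2$ and use $\expect\|X\|^4<\infty$), which is the source of the $n^{-1}$ term in part (a); when in addition $\expect L_X^4<\infty$, the Lipschitz bound $|X_1(T_{11})-X_1(T_{12})|\,1_{|T_{11}-T_{12}|<h_0}\le L_X h_0$ forces $\phi-\sigma_0^2 B=O(L_X^2 h_0^2 B)=O(L_X^2 h_0^3)$, hence $\var(\phi)=O(h_0^6)$, which after dividing by $nB^2$ is $O(h_0^4/n)$ and is absorbed into the $h_0^4$ term — exactly the improvement from $n^{-1}$ to $n^{-1}m^{-1}$ in part (b). Assembling the three contributions gives $\expect\,\mathrm{(I)}^2=\Op(n^{-1}+n^{-1}m^{-2}h_0^{-1})$ in case (a) and $\Op(h_0^4+n^{-1}m^{-1}+n^{-1}m^{-2}h_0^{-1})$ in case (b).

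Adding $\mathrm{(I)}$, $\mathrm{(II)}$, $\mathrm{(III)}$ gives the two displayed rates, and the balancing choice $h_0\asympeq(nm^2)^{-1/5}$, for which $h_0^4\asympeq n^{-1}m^{-2}h_0^{-1}\asympeq n^{-4/5}m^{-8/5}$, yields the optimized rates. I expect the main obstacle to be the bookkeeping in the $\var(D_1)$ split: cleanly separating the genuinely irreducible contributions (the shared-noise $\varepsilon_{1j}^2$ fluctuation giving $n^{-1}m^{-1}$, and the self-pair giving $n^{-1}m^{-2}h_0^{-1}$) from the curve-to-curve ``signal'' fluctuation giving $n^{-1}$, which collapses to higher order precisely when the sample paths of $X$ are Lipschitz with a finite fourth moment; the change of variables for $B$, the consistency of $\hat B$, and the requisite moment bounds are elementary or already packaged in Lemma \ref{lem:A0-A1-A2-B}.
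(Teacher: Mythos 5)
Your proposal is correct and follows essentially the same route as the paper: the paper reduces Theorem \ref{thm:sigma0} to Lemma \ref{lem:A0-A1-A2-B}, whose proof uses exactly your ingredients --- the bias bound $A_1-A_2=O(h_0^3)$, the variance of $\hat B$ with $B\asymp h_0$, and the variance of $\hat A_0-\hat A_1$ split by the number of shared indices, with the Lipschitz condition $\expect L_X^4<\infty$ sharpening only the all-distinct-index term from $O(h_0^2)$ to $O(h_0^6)$. Your conditioning on $(X_1,O_1)$ for the disjoint pairs is just a compact restatement of the paper's $I_1$--$I_4$ decomposition conditioned on $O_i$, and the final assembly and bandwidth balancing match the paper's.
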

If we define  $\hat\sigma_0^2=(\hat A_0-\hat A_1)/(\hat B+\Delta)$ with $\Delta=(nm)^{-2}h_0$, the ridged version of \eqref{eq:simga0}, then in the above theorem, $(\hat{\sigma}_{0}^{2}-\sigma_{0}^{2})^{2}$ can be replaced with $\expect (\hat{\sigma}_{0}^{2}-\sigma_{0}^{2})^{2}$ and $\Op(\cdot)$ can be replaced with $O(\cdot)$, respectively.  For comparison, under conditions stronger than \ref{cond:A1}--\ref{cond:A3}, the rate derived in \cite{Yao2005a} for the PACE estimator is at best  $(\hat{\sigma}_{0}^{2}-\sigma_{0}^{2})^{2}=\Op(n^{-1/2})$.  This rate  was improved by \cite{Paul2011} to $\expect(\hat{\sigma}_{0}^{2}-\sigma_{0}^{2})^{2}=O(n^{-1}+n^{-4/5}m^{-4/5}+n^{-2/3}m^{-4/3})$. Our estimator clearly enjoys a faster convergence rate, in addition to its computational efficiency. The rate in part \ref{thm:sigma0:b} of Theorem \ref{thm:sigma0} has little room for improvement, since when $n$ is finite but $m\rightarrow\infty$, the rate is optimal, i.e., $\expect(\hat{\sigma}_0^2-\sigma_0^2)^2=O(m^{-1})$.   When $m$ is finite but $n\rightarrow\infty$ in the sparse design,  we obtain $\expect(\hat{\sigma}_0^2-\sigma_0^2)^2=O(n^{-4/5})$, in contrast to  the rate $O_P(n^{-2/3})$ for the PACE estimator according to \cite{Paul2011}.

To study the properties of $\hat{\mu}(t)$ and $\hat{\sigma}^2(t)$, we shall assume
\begin{enumerate}
[leftmargin=*,labelsep=7mm,label= (B\arabic*)]
\setcounter{enumi}{0}%
\item\label{cond:B1}the kernel $K$ is a symmetric and Lipschitz continuous
density function supported on $[-1,1]$.
\end{enumerate}
Also, the bandwidth $h_\mu$ and $h_\sigma$ are assumed to meet the following conditions.
\begin{enumerate}
[leftmargin=*,labelsep=7mm,label= (H\arabic*)]
\setcounter{enumi}{1}%
\item\label{cond:H2}$h_{\mu}\rightarrow0$ and $nmh_{\mu}\rightarrow\infty$.
\item\label{cond:H3} $h_{\sigma}\rightarrow0$ and $nmh_{\sigma}\rightarrow\infty$.
\end{enumerate}
The choice of these bandwidths depends on the interplay of the sampling rate $m$ and sample size $n$. The optimal choice is given in the following condition.
\begin{enumerate}
[leftmargin=*,labelsep=7mm,label= (H\arabic*)]
\setcounter{enumi}{3}%
\item\label{cond:H4}If $m\asymplt n^{1/4}$,  then $h_\mu\asympeq h_\sigma\asympeq (nm)^{-1/5}$, where the notation $a_n\asymplt b_n$ means $\lim_{n\rightarrow\infty}a_n/b_n<\infty$. Otherwise, $\max\{h_\mu,h_\sigma\}\asympeq n^{-1/4}$. Also, $h_0\asympeq (nm^2)^{-1/5}$.
\end{enumerate}
The asymptotic convergence rates for $\hat{\mu}$ and $\hat{\sigma}_X^2$ are given in the following theorem, whose proof can be obtained by adapting the proof of Proposition 1 in  \citet{Lin2020+} and hence is omitted. It shows that both $\hat{\mu}$ and $\hat{\sigma}_X^2$ have the same rate, which is hardly surprising since they are both obtained by a one-dimensional local linear smoothing technique. Note that our results generalize those in \cite{Fan2007} and \cite{Fan2008} by taking the measurement errors and the order of the sampling rate $m$ into account in the theoretical analysis. In addition, our $\ltwo$ convergence rates of these estimators complement the asymptotic normality results in  \cite{Fan2007} and \cite{Fan2008}.
\begin{theorem}\label{thm:mu-sigma}
Suppose the conditions \ref{cond:A1}--\ref{cond:A3} hold.
\begin{enumerate}
[leftmargin=*,labelsep=2mm,label= \textup{(\alph*)}]
\item With additional conditions \ref{cond:B1} and \ref{cond:H2}, $\expect\|\hat{\mu}-\mu\|^{2}=O(h_{\mu}^{4}+n^{-1}+n^{-1}m^{-1}h_{\mu}^{-1})$. With the choice of bandwidth $h_\mu$ in \ref{cond:H4}, $\expect\|\hat{\mu}-\mu\|^{2}=O\left((nm)^{-4/5}+n^{-1}\right)$.
\item With additional conditions \ref{cond:B1} and \ref{cond:H1}--\ref{cond:H3}, $\expect\|\hat{\sigma}_X^{2}-\sigma_X^{2}\|^{2}=O(h_{\sigma}^{4}+h_\mu^4+h_0^{4}+n^{-1}+n^{-1}m^{-1}h_{\sigma}^{-1}+n^{-1}m^{-1}h_{\mu}^{-1}+n^{-1}m^{-2}h_0^{-1})$. With the choice of bandwidth in \ref{cond:H4}, $\expect\|\hat{\sigma}_X^{2}-\sigma_X^{2}\|^{2}=O\left((nm)^{-4/5}+n^{-1}\right)$.
\end{enumerate}
\end{theorem}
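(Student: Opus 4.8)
The plan is to exploit the common structure of the two estimators: $\hat\mu$ and $\hat\varsigma^2$ are both ridged local linear smoothers that regress a response $U_{ij}$ on the design point $T_{ij}$, with $U_{ij}=Y_{ij}$ in the first case and $U_{ij}=\{Y_{ij}-\hat\mu(T_{ij})\}^2$ in the second. So I would prove part (a) first and then reduce part (b) to it together with Theorem \ref{thm:sigma0}. For part (a), I would start from the closed form $\tilde\mu(t)=(R_0S_2-R_1S_1)/(S_0S_2-S_1^2)$ and first dispose of the ridging. Under \ref{cond:A1} the marginal design density $g(t)=\int f_{T|O}(t|u)f_O(u)\,\diffop u$ is bounded away from zero on $\tdomain$, so $\expect S_0\to g(t)$, $\expect S_1\to0$, $\expect S_2\to g(t)\mu_2(K)$, and $S_0S_2-S_1^2$ concentrates near $g(t)^2\mu_2(K)>0$; a Bernstein bound over the $n$ independent per-curve contributions shows that $\prob(|S_0S_2-S_1^2|<\Delta)$ decays faster than any power of $nm$ when $\Delta=(nm)^{-2}$, so that, after bounding $|\hat\mu(t)|\le|R_0S_2-R_1S_1|/\Delta$ on the ridging event and applying Cauchy--Schwarz, $\expect[(\hat\mu(t)-\mu(t))^2\,1_{\{|S_0S_2-S_1^2|<\Delta\}}]$ is negligible. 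On the complement $\hat\mu=\tilde\mu$, and I would run the usual bias--variance decomposition.

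The bias is routine: with a symmetric kernel \ref{cond:B1} the local linear weights annihilate affine functions, so \ref{cond:A2} gives $\sup_t|\expect\tilde\mu(t)-\mu(t)|=O(h_\mu^2)$ and hence an $O(h_\mu^4)$ contribution to $\expect\|\hat\mu-\mu\|^2$. The crux is the variance. Taking $w_i\asymp(nm)^{-1}$ and writing $Y_{ij}-\mu(t)=\{X_i(T_{ij})-\mu(t)\}+\varepsilon_{ij}$, the measurement-error part is a sum of genuinely independent terms, each of order $\expect[K_{h_\mu}(T-t)^2]\asymp h_\mu^{-1}$, contributing $O((nmh_\mu)^{-1})$. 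For the $X$-part the $m$ terms within one curve are dependent through $X_i$ and through $O_i$, so I would split the per-curve variance into a diagonal piece of order $mh_\mu^{-1}$ and an off-diagonal piece of order $m^2$ --- the latter because $\expect[K_{h_\mu}(T_{ij}-t)K_{h_\mu}(T_{il}-t)]=O(1)$ for $j\ne l$ while $\expect[\{X_i(T_{ij})-\mu(t)\}\{X_i(T_{il})-\mu(t)\}\mid T_{ij},T_{il}]$ is $O(1)$ on the relevant event, the fourth moment \ref{cond:A3} controlling the remaining stochastic terms. Dividing by $(nm)^2$ and summing over $i$ gives $\var\hat\mu(t)=O(n^{-1}+(nmh_\mu)^{-1})$; integrating over $t$ and inserting \ref{cond:H4} yields part (a). The appearance of the $n^{-1}$ (rather than $(nm)^{-1}$) term is precisely this off-diagonal $m^2$ contribution and reflects that a single curve carries only ``one curve's worth'' of independent information.

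For part (b) I would write $\hat\sigma_X^2-\sigma_X^2=(\hat\varsigma^2-\varsigma^2)-(\hat\sigma_0^2-\sigma_0^2)$ using $\varsigma^2=\sigma_X^2+\sigma_0^2$; the second term is $O(h_0^4+n^{-1}+n^{-1}m^{-2}h_0^{-1})$ in expectation by the ridged version of Theorem \ref{thm:sigma0}\ref{thm:sigma0:a}. For the first, I would expand $\{Y_{ij}-\hat\mu(T_{ij})\}^2=\{Y_{ij}-\mu(T_{ij})\}^2+2\{Y_{ij}-\mu(T_{ij})\}\{\mu(T_{ij})-\hat\mu(T_{ij})\}+\{\mu(T_{ij})-\hat\mu(T_{ij})\}^2$. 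Smoothing the leading oracle term is handled exactly as in part (a), now with target $\varsigma^2(t)=\covarop(t,t)+\sigma_0^2$, whose second derivative is bounded by \ref{cond:A2} and whose squared residuals have finite variance by \ref{cond:A3}, giving $O(h_\sigma^4+n^{-1}+n^{-1}m^{-1}h_\sigma^{-1})$. The purely quadratic term contributes $O(\expect\|\hat\mu-\mu\|^2)$ by boundedness of the smoothing operator, i.e.\ $O(h_\mu^4+n^{-1}+n^{-1}m^{-1}h_\mu^{-1})$ by part (a), and the cross term is bounded by Cauchy--Schwarz in terms of the square roots of the other two. Collecting the three pieces with the noise-variance contribution and optimizing bandwidths via \ref{cond:H4} gives part (b).

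The main obstacle I expect is the variance bookkeeping under within-curve dependence --- keeping the diagonal and off-diagonal parts of $\var(\sum_j K_{h_\sigma}(T_{ij}-t)\{\cdots\})$ separate and uniform in $t$ so that the $n^{-1}$ term emerges correctly --- together with, in part (b), handling the cross term between the estimation error $\hat\mu-\mu$ and the residuals $Y_{ij}-\mu(T_{ij})$, which are statistically dependent. The cleanest route is to condition on the design and on $\{X_i\}$ where possible and apply Cauchy--Schwarz as above, or, as the authors note, to quote the analysis behind Proposition 1 of \citet{Lin2020+} directly.
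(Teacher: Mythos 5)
The paper gives no written proof of Theorem \ref{thm:mu-sigma} (it defers to an adaptation of Proposition 1 of \citet{Lin2020+}), and your outline is exactly the route that adaptation takes: a ridged local linear bias--variance analysis for $\hat\mu$, with the within-curve off-diagonal covariance terms (of order $m^2$ per curve, since $\expect[K_{h_\mu}(T_{ij}-t)K_{h_\mu}(T_{il}-t)]=O(1)$ for $j\neq l$) producing the $n^{-1}$ term and the diagonal terms producing $(nmh_\mu)^{-1}$; and for part (b) the decomposition $\hat\sigma_X^2-\sigma_X^2=(\hat\varsigma^2-\varsigma^2)-(\hat\sigma_0^2-\sigma_0^2)$ with the second piece controlled in expectation by the ridged version of Theorem \ref{thm:sigma0}\ref{thm:sigma0:a}. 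That bookkeeping, and the verification that \ref{cond:H4} yields $O((nm)^{-4/5}+n^{-1})$ in both regimes $m\asymplt n^{1/4}$ and $m\asympgt n^{1/4}$, is correct.

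The one step that would fail as written is your treatment of the cross term $2\,\mathcal{S}_t[\{Y-\mu\}\{\mu-\hat\mu\}]$, where $\mathcal{S}_t$ denotes the local linear smoother at $t$. Cauchy--Schwarz pairs this with $(\mathcal{S}_t[\{Y-\mu\}^2])^{1/2}(\mathcal{S}_t[\{\mu-\hat\mu\}^2])^{1/2}$, and the first factor is $O_P(1)$ (it converges to $\varsigma^2(t)$), \emph{not} the square root of the oracle error $O(h_\sigma^4+n^{-1}+n^{-1}m^{-1}h_\sigma^{-1})$; so the "geometric mean of the other two error terms" bound is not what Cauchy--Schwarz delivers. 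What it delivers is a squared cross-term contribution of the order of the quadratic term itself, i.e.\ essentially $\expect\sup_t|\hat\mu(t)-\mu(t)|^2$, and uniform-convergence bounds for $\hat\mu$ carry a $\log n$ factor, which would spoil the clean rate $O((nm)^{-4/5}+n^{-1})$ claimed in the theorem. To get the stated rate you must exploit the structure of the cross term rather than its magnitude: split $\hat\mu-\mu$ into its bias (order $h_\mu^2$, deterministic given the design, so the residuals' conditional mean zero kills the leading contribution) and its stochastic part, and handle the dependence between the stochastic part and the residuals of the same curve by a leave-one-curve-out (or direct moment) computation; this is precisely the nontrivial content of adapting \citet{Lin2020+}. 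A secondary, minor caveat: super-polynomial decay of $\prob(|S_0S_2-S_1^2|<\Delta)$ via Bernstein requires $nmh_\mu$ to grow at least like a power of $n$ (true under \ref{cond:H4}, but not guaranteed by \ref{cond:H2} alone), so for the first displayed rate one should control the ridging event by a moment argument as in \citet{Zhang2016} rather than an exponential bound.
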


To derive the asymptotic properties of $\hat{\covarop}(s,t)=\hat{\sigma}_X(s)\rho_{\hat{\theta}}(s,t)\hat{\sigma}_X(t)$, we need the convergence rate of $\hat{\theta}$. Define
\begin{align*}
Q(\theta) & =\expect\{\sigma_X(T_{11})\sigma_X(T_{12})\rho_{\theta}(T_{11},T_{12})-[Y_{11}-\mu(T_{11})][Y_{12}-\mu(T_{12})]\}^{2},
\end{align*}
and assume the following conditions.
\begin{enumerate}
[leftmargin=*,labelsep=7mm,label= (B\arabic*)]
\setcounter{enumi}{1}%
\item\label{cond:B2} $\rho_{\theta}(s,t)$ is twice continuously differentiable
with respect to $s$ and $t$. Furthermore, the first three derivatives of $\rho_\theta(s,t)$ with respect to $\theta$ are uniformly bounded for all $\theta, s ,t,d_n$.
\item\label{cond:B3} $\lambda_{\min}\left(\frac{\partial^{2}Q}{\partial\theta^{2}}\mid_{\theta=\theta_{0}}\right)>c_0d_n^{-\tau}$ for some $c_0>0$ and $\tau\geq0$,
where $\theta_0$ denotes the true value of $\theta$, and $\lambda_{\min}(\cdot)$ denotes the smallest eigenvalue of
a matrix.
\item\label{cond:B4}$\expect\sup_t\|X(t)\|^{4+\epsilon_0}<\infty$ for some $\epsilon_0>0$ and $\expect\varepsilon^{4}<\infty$.
\end{enumerate}
Note that in the condition \ref{cond:B3}, we allow the smallest eigenvalue of the Hessian $\frac{\partial^{2}Q}{\partial\theta^{2}}$ to decay with $d_n$. This, departing from the assumption in \cite{Fan2008} of fixed dimension on the parameter $\theta$, enables us to incorporate the case that $\rho_\theta$ is constructed from the aforementioned convex combination of a \emph{diverging} number of correlation functions, e.g., $\rho_\theta(s,t)=d^{-1}_n\sum_{j=1}^{d_n}\rho_{\theta_j}(s,t)$, where $\tau=1$ if all components $\rho_{\theta_j}$ satisfy \ref{cond:B2} uniformly. The condition \ref{cond:B4}, although it  is slightly stronger than \ref{cond:A3}, is often required in  functional data analysis, e.g., in  \cite{Li2010} and \cite{Zhang2016} for the derivation of uniform convergence rates for $\hat{\mu}$. Such uniform rates are required to bound $\partial\hat{Q}_n/\partial\theta$ sharply in our development, which is critical to establish the following rate for $\hat{\theta}$.
\begin{proposition}
\label{thm:theta}Suppose  the conditions
\ref{cond:A1}--\ref{cond:A2} and \ref{cond:B1}--\ref{cond:B4} hold. If $d_n=o(n^{1/(4+4\tau)})$, then   with the choice of bandwidth in \ref{cond:H4}, $\|\hat{\theta}-\theta_{0}\|^{2}=\Op(d_n^{2\tau+1}/n)$.
\end{proposition}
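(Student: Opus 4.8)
The plan is to run the standard increasing-dimension $M$-estimation argument, profiling out the nonparametric ingredients $\hat{\mu}$ and $\hat{\sigma}_X$. Set $r_n=(d_n^{2\tau+1}/n)^{1/2}$. First note that $\theta_0$ minimizes $Q$ with $\partial Q/\partial\theta|_{\theta_0}=0$: conditioning on $(T_{11},T_{12})$ and using that the errors are independent across $j\neq l$ gives $\expect\{[Y_{11}-\mu(T_{11})][Y_{12}-\mu(T_{12})]\mid T_{11},T_{12}\}=\covarop(T_{11},T_{12})=\sigma_X(T_{11})\sigma_X(T_{12})\rho_{\theta_0}(T_{11},T_{12})$, whence $Q(\theta)=Q(\theta_0)+\expect\{\sigma_X(T_{11})\sigma_X(T_{12})[\rho_\theta-\rho_{\theta_0}](T_{11},T_{12})\}^2$. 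It therefore suffices to show that, with probability tending to one, $\hat{Q}_n(\theta_0+r_n u)>\hat{Q}_n(\theta_0)$ for every $u$ with $\|u\|=C$ once $C$ is a large enough constant, since this confines a local minimizer of $\hat{Q}_n$, hence $\hat{\theta}$, to $\{\|\theta-\theta_0\|\le Cr_n\}$.

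Expanding, $\hat{Q}_n(\theta_0+r_n u)-\hat{Q}_n(\theta_0)=r_n\langle\partial\hat{Q}_n/\partial\theta|_{\theta_0},u\rangle+\tfrac12 r_n^2\,u^{\transpose}\bigl(\partial^2\hat{Q}_n/\partial\theta^2|_{\bar\theta}\bigr)u$ for some $\bar\theta$ between $\theta_0$ and $\theta_0+r_nu$, so I would prove two bounds. (i) \emph{Gradient:} $\|\partial\hat{Q}_n/\partial\theta|_{\theta_0}\|=\Op((d_n/n)^{1/2})$. Split this gradient into its ``oracle'' version (with the true $\mu,\sigma_X$ plugged in) plus the errors from using $\hat\mu$ and $\hat\sigma_X$. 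The oracle part is an average over the $n$ subjects of conditionally mean-zero, coordinatewise bounded (by \ref{cond:B2}) $d_n$-vectors, so a second-moment computation using \ref{cond:B2} and \ref{cond:B4} yields $\Op((d_n/n)^{1/2})$. The $\hat\mu$-error is, to leading order, proportional to $n^{-1}\sum_i \{m_i(m_i-1)\}^{-1}\sum_{j\ne l}[\hat\mu-\mu](T_{ij})\,[Y_{il}-\mu(T_{il})]\times(\text{bounded})$, whose summands have conditional mean zero given the design and $\hat\mu$; hence it is only $\Op(n^{-1/2}d_n^{1/2}\|\hat\mu-\mu\|_\infty)$, a Neyman-orthogonality phenomenon that makes it negligible under \ref{cond:H4} (the uniform rate of $\hat\mu$ available under \ref{cond:B4}, as remarked after the statement, is what is used here). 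The $\hat\sigma_X$-error is bounded using the $\ltwo$ rate---and, for the own-observation/diagonal remainders, the uniform rate---of $\hat\sigma_X^2$ from Theorems \ref{thm:sigma0} and \ref{thm:mu-sigma} under \ref{cond:H4}; it is here that $d_n=o(n^{1/(4+4\tau)})$ is invoked, to keep every term that couples $d_n$ with these nonparametric errors at order $o(r_n^2 d_n^{-\tau})$. (ii) \emph{Hessian:} with probability tending to one, $\partial^2\hat{Q}_n/\partial\theta^2|_\theta\succeq\tfrac{c_0}{2}d_n^{-\tau}I$ uniformly over $\|\theta-\theta_0\|\le Cr_n$. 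This follows from \ref{cond:B3}, from the operator-norm Lipschitz stability of $\theta\mapsto\partial^2 Q/\partial\theta^2$ on that shrinking ball (using the uniformly bounded third $\theta$-derivatives of \ref{cond:B2} together with $r_n=o(d_n^{-\tau-1})$, which the dimension restriction implies), and from a Frobenius-norm bound on $\partial^2\hat{Q}_n/\partial\theta^2-\partial^2 Q/\partial\theta^2$ (a sum of $O(d_n^2)$ entrywise deviations, each $\Op(n^{-1/2})$ after absorbing the plug-in errors), again $o_P(d_n^{-\tau})$ under the dimension restriction.

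Combining the two bounds on $\|u\|=C$, the cross term is $\Op(r_n C(d_n/n)^{1/2})=\Op(C\,d_n^{\tau+1}/n)$ while the quadratic term is at least $\tfrac{c_0}{4}r_n^2 C^2 d_n^{-\tau}=\tfrac{c_0}{4}C^2 d_n^{\tau+1}/n$; since both have the same order in $n$ and $d_n$, taking $C$ large forces the increment to be positive with probability arbitrarily close to one, which gives $\|\hat{\theta}-\theta_0\|^2=\Op(d_n^{2\tau+1}/n)$. I expect the main obstacle to be part (i): propagating the errors of $\hat\mu$ and especially $\hat\sigma_X$ through $\partial\hat{Q}_n/\partial\theta$ with the required sharpness---isolating the conditionally mean-zero fluctuations (which carry an extra $n^{-1/2}$) from the genuinely biased pieces, treating the dependence between $\hat\mu,\hat\sigma_X$ and the same-subject design points via leave-one-out corrections, and keeping all error terms below $r_n^2 d_n^{-\tau}=d_n^{\tau+1}/n$---which is where the technical bulk, and the precise form of the condition $d_n=o(n^{1/(4+4\tau)})$, lie.
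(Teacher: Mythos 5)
Your proposal follows the same skeleton as the paper's proof: set the localization radius $\eta_n=\sqrt{d_n^{1+2\tau}/n}$, show the increment of $\hat{Q}_n$ on the sphere $\|\theta-\theta_0\|=c\eta_n$ is positive by combining a gradient bound $\|\partial\hat{Q}_n/\partial\theta|_{\theta_0}\|=\Op(\sqrt{d_n/n})$ (oracle gradient plus plug-in error) with the eigenvalue lower bound of \ref{cond:B3}, and use the dimension restriction to make the remainder negligible. The one genuine divergence is the device used at the crux, namely decoupling the plug-in errors of $\hat{\mu}$ and $\hat{\sigma}_X$ from the mean-zero fluctuations in the gradient. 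The paper does this by viewing the residual sum $G$ as a random linear functional on the class $\Lambda_0=\{f\in C^{2}(\tdomain):\|f\|_\infty\leq1\}$, invoking weak convergence of $\sqrt{n}G$ in $C(\Lambda_0)$ (following Severini and Wong) together with the uniform rate $\sup_t|\hat{\sigma}_X(t)-\sigma_X(t)|=\Op(\sqrt{a_n})$, so that the normalized, data-dependent perturbation $(a_n\log n)^{-1/2}f_j$ lies in $\Lambda_0$ with probability tending to one; the uniformity over the class is what absorbs the dependence between the plug-in estimates and subject $i$'s data, with no leave-one-out correction needed. Your route instead relies on a conditional-mean-zero/Neyman-orthogonality computation; as you note yourself, the conditional-mean-zero claim is literally false when $\hat{\mu}$ (or $\hat{\sigma}_X$) is built from all subjects including $i$, so the leave-one-out machinery you defer to is not optional but is the substitute for the paper's function-class argument, and you would also have to dispose of the genuinely biased ($I_2$-type) pieces, which the paper handles (tersely) by the same uniform-class argument rather than by orthogonality. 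Two smaller bookkeeping differences: the paper expands to third order at $\theta_0$ with the cubic remainder bounded by $\Op(d_n^{3/2}\|v\|^{3})$, and it is exactly the comparison $\eta_n^{3}d_n^{3/2}$ versus $\eta_n^{2}d_n^{-\tau}$ that produces $d_n=o(n^{1/(4+4\tau)})$; your mean-value version with Hessian stability over the shrinking ball is equivalent, but your stated requirement $\eta_n=o(d_n^{-\tau-1})$ is slightly too weak an accounting (the directional derivative of each Hessian entry picks up an extra $\sqrt{d_n}$, giving $d_n^{3/2}\eta_n=o(d_n^{-\tau})$, which is again the exponent $1/(4+4\tau)$), though the assumed rate covers it, so the conclusion is unaffected.
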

The above result suggests that the estimation quality of $\hat{\theta}$ depends on the dimension of parameters, sample size and singularity of the Hessian matrix at $\theta=\theta_0$, measured by the constant $\tau$ in condition \ref{cond:B3}. In practice, a few  parameters are often sufficient for an adequate fit. In such cases, the dimension $d_n$ might not grow with sample size, i.e., $d_n=O(1)$, and we obtain a parametric rate for $\hat{\theta}$. Now we are ready to state our main theorem that establishes the convergence rate for $\hat{\covarop}$ in the Hilbert-Schmidt norm $\|\cdot\|_{HS}$, which follows immediately from the above results.
\begin{theorem}\label{thm:cov}
Under the same conditions of Proposition \ref{thm:theta}, we have
$\|\hat{\covarop}-\covarop\|_{HS}^{2}=\Op(h_{\sigma}^{4}+h_\mu^4+h_0^{4}+n^{-1}+n^{-1}m^{-1}h_{\sigma}^{-1}+n^{-1}m^{-1}h_{\mu}^{-1}+n^{-1}m^{-2}h_0^{-1}+d_n^{2\tau+1}n^{-1}).$
With the choice of bandwidth in \ref{cond:H4}, $\|\hat{\covarop}-\covarop\|_{HS}^{2}=\Op\left((nm)^{-4/5}+d_n^{2\tau+1}n^{-1}\right)$.
\end{theorem}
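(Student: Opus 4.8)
The plan is to deduce Theorem~\ref{thm:cov} by assembling the three rates already in hand --- Theorem~\ref{thm:sigma0} for $\hat\sigma_0^2$, Theorem~\ref{thm:mu-sigma} for $\hat\mu$ and $\hat\sigma_X^2$, and Proposition~\ref{thm:theta} for $\hat\theta$ --- through an elementary decomposition of $\hat\covarop-\covarop$. First note that the hypotheses of Proposition~\ref{thm:theta}, together with the bandwidth prescription~\ref{cond:H4}, already subsume conditions~\ref{cond:H1}--\ref{cond:H3} and \ref{cond:A3} (the last being implied by \ref{cond:B4}), so Theorems~\ref{thm:sigma0} and \ref{thm:mu-sigma} are both available. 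Throughout I write $\|\cdot\|$ for the $\ltwo(\tdomain)$ norm, replace $\hat\sigma_X^2$ by its positive part so that $\hat\sigma_X=(\hat\sigma_X^2)^{1/2}$ is well defined (this does not enlarge the $\ltwo$ error of the variance estimate, since $\sigma_X^2\geq0$), and observe that $c:=\inf_t\sigma_X^2(t)>0$ because $\covarop$ is continuous on the compact $\tdomain\times\tdomain$ and $\rho$ is well defined.

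\textbf{Decomposition and term-by-term bounds.} I would split
\begin{align*}
\hat\covarop(s,t)-\covarop(s,t)
&=\{\hat\sigma_X(s)\hat\sigma_X(t)-\sigma_X(s)\sigma_X(t)\}\,\rho_{\hat\theta}(s,t)\\
&\quad{}+\sigma_X(s)\sigma_X(t)\{\rho_{\hat\theta}(s,t)-\rho_{\theta_0}(s,t)\}
=:\mathrm{I}(s,t)+\mathrm{II}(s,t),
\end{align*}
so $\|\hat\covarop-\covarop\|_{HS}^2\leq 2\|\mathrm{I}\|_{HS}^2+2\|\mathrm{II}\|_{HS}^2$. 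For $\mathrm{I}$, using $|\rho_{\hat\theta}|\leq1$ and $\hat\sigma_X(s)\hat\sigma_X(t)-\sigma_X(s)\sigma_X(t)=\{\hat\sigma_X(s)-\sigma_X(s)\}\hat\sigma_X(t)+\sigma_X(s)\{\hat\sigma_X(t)-\sigma_X(t)\}$ and integrating yields $\|\mathrm{I}\|_{HS}^2\leq 2\|\hat\sigma_X-\sigma_X\|^2(\|\hat\sigma_X\|^2+\|\sigma_X\|^2)$. Since $\|\hat\sigma_X^2-\sigma_X^2\|=\op(1)$ by Theorem~\ref{thm:mu-sigma} we get $\|\hat\sigma_X\|^2=\Op(1)$, and $|\hat\sigma_X-\sigma_X|=|\hat\sigma_X^2-\sigma_X^2|/(\hat\sigma_X+\sigma_X)\leq c^{-1/2}|\hat\sigma_X^2-\sigma_X^2|$, so that $\|\mathrm{I}\|_{HS}^2=\Op(\|\hat\sigma_X^2-\sigma_X^2\|^2)$, which by Theorem~\ref{thm:mu-sigma} is $\Op(h_\sigma^4+h_\mu^4+h_0^4+n^{-1}+n^{-1}m^{-1}h_\sigma^{-1}+n^{-1}m^{-1}h_\mu^{-1}+n^{-1}m^{-2}h_0^{-1})$. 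For $\mathrm{II}$, the mean value theorem together with the uniform bound on $\partial\rho_\theta/\partial\theta$ from \ref{cond:B2} gives an absolute constant $C$ with $\sup_{s,t}|\rho_{\hat\theta}(s,t)-\rho_{\theta_0}(s,t)|\leq C\|\hat\theta-\theta_0\|$, whence $\|\mathrm{II}\|_{HS}^2\leq C^2\|\sigma_X\|^4\|\hat\theta-\theta_0\|^2=\Op(d_n^{2\tau+1}n^{-1})$ by Proposition~\ref{thm:theta}. Adding the two bounds gives the first display; the second follows by inserting \ref{cond:H4}, since then $h_\mu^4,h_\sigma^4,n^{-1}m^{-1}h_\mu^{-1},n^{-1}m^{-1}h_\sigma^{-1}\asympeq(nm)^{-4/5}$ and $h_0^4,n^{-1}m^{-2}h_0^{-1}\asympeq(nm^2)^{-4/5}\leq(nm)^{-4/5}$ when $m\asymplt n^{1/4}$, while otherwise all these terms are $O(n^{-1})$, and $n^{-1}\leq d_n^{2\tau+1}n^{-1}$.

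\textbf{Main obstacle.} The genuine work is already contained in the three cited results, so the only points requiring care in putting them together are: (i) converting the $\ltwo$ rate for $\hat\sigma_X^2$ into one for $\hat\sigma_X$, which needs $\sigma_X$ bounded away from zero together with the truncation of $\hat\sigma_X^2$ above; and (ii) controlling $\|\mathrm{II}\|_{HS}$ uniformly not only over $(s,t)$ but over the \emph{growing} parameter dimension $d_n$. Point (ii) is exactly what condition~\ref{cond:B2} is designed for --- for instance, if $\rho_\theta(s,t)=d_n^{-1}\sum_{j=1}^{d_n}\rho_{\theta_j}(s,t)$ then the $\theta$-gradient of $\rho_\theta$ has Euclidean norm $O(d_n^{-1/2})$ and is thus uniformly bounded --- and it is what lets the rate of $\hat\theta$ enter the covariance rate without an extra factor of $d_n$; I expect this to be the step one must phrase most carefully.
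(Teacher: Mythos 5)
Your proof is correct and follows exactly the route the paper intends: the paper gives no separate proof of Theorem~\ref{thm:cov}, stating only that it follows immediately from Theorem~\ref{thm:mu-sigma} and Proposition~\ref{thm:theta}, and your decomposition into the variance-part error (bounded via $\|\hat\sigma_X^2-\sigma_X^2\|$) and the correlation-part error (bounded via $\|\hat\theta-\theta_0\|$) is precisely that immediate argument. Your explicit handling of the two delicate points --- lower-bounding $\sigma_X$ to pass from $\hat\sigma_X^2$ to $\hat\sigma_X$, and reading \ref{cond:B2} as a dimension-uniform bound on the $\theta$-gradient of $\rho_\theta$ so that no extra factor of $d_n$ enters --- simply makes explicit what the paper leaves implicit.
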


In practice, a fully nonparametric approach like local regression to estimating the correlation structure is inefficient, in particular when data are snippets. On the other hand, a parametric method with a fixed number of parameters might be restrictive when the sample size is large. One way to overcome such a dilemma is to allow the family of parametric models to grow with the sample size. As a working assumption, one might consider that the correlation function $\rho$ falls into $\mathcal{F}_n$, a $d_n$-dimensional family of models for correlation functions, when the sample size is $n$. Here, the dimension typically grows with the sample size. For example, one might consider a $d_n$-Fourier basis family:
\begin{equation}\label{eq:basis-expansion}
\kappa_{\theta}(s,t)=\frac{1}{\psi(s)\psi(t)}\sum_{j=1}^{d_n} \theta_j\phi_j(s)\phi_j(t),\quad\theta_1,\ldots,\theta_{d_n} \geq 0 \text{ and } \sum_{j=1}^{d_n}\theta_j=1,
\end{equation}
where $\psi(t)=\left(\sum_{j=1}^{d_n}\theta_j\phi_j^2(t)\right)^{1/2}$ and $\phi_1,\ldots$ are fixed  orthonormal Fourier basis functions defined on $\tdomain$. The theoretical result in Theorem \ref{thm:cov} applies to this setting by explicitly accounting for the impact of the dimension $d_n$ on the convergence rate. 

\section{Simulation Studies}\label{sec:simulation}
To evaluate the numerical performance of the proposed estimators, 
we generated $X(\cdot)$ from a Gaussian process. 
Three different covariance functions were considered, namely,
\begin{itemize}
	\item[I.] $\covarop(s,t)=\sigma_X(s)\rho_\theta(s,t)\sigma_X(t)$ with the variance function $\sigma^2_X(t)=\sqrt{t}e^{-(t-0.1)^2/10}+1$ and the Mat\'ern correlation function $\rho_{\theta=(0.5,1)}$, 
	\item[II.] $\covarop(s,t)=\sum_{k=1}^{50} 2k^{-\lambda}\phi_k(s)\phi_k(t)$ with $\lambda=2$ and Fourier basis functions $\phi_k(t)=\sqrt{2}\sin(2k\pi t)$, and
	\item[III.] $\covarop(s,t)=\sum_{1\leq j,k\leq 5} c_{jk}\phi_j(s)\phi_k(t)$ with $c_{jk}=e^{-|j-k|}/5$.
\end{itemize}
 Two different sample sizes $n=50$ and $n=200$ were considered to illustrate the behavior of the estimators under a small sample size and a relatively large sample size. We set the domain $\mathcal{T}=[0,1]$ and $\delta=0.25$.  
 
 To evaluate the impact of the mean function, we also considered two different mean functions,  $\mu_1(t)=2t^2\cos(2\pi t)$ and $\mu_2(t)=e^t/2$. We found that the results are not sensitive to the mean function, and thus focus only on the case $\mu_1(t)$ in this section; the results for the case $\mu(t)=e^t/2$ are provided in Supplementary Material. In addition, to evaluate the impact of the design, we considered two design schemes. In the first scheme, that is referred to as the sparse design, each curve was sparsely sampled at 4 points on average to mimic the scenario of the data application in Section \ref{sec:application}. In the second scheme, that is referred to as the dense design, each snippet was recorded in a dense ($m_1=\cdots=m_n=26$) and regular grid of an individual specific subinterval of length $\delta$. As the focus of the paper is on sparse snippets, we report the results for the sparse design below. The results for dense snippets are reported in Supplementary Material.

To assess the performance of the estimators for the noise variance $\sigma_0^2$, we considered different noise levels  $\sigma_0^2=0,0.1,0.25,0.5$, varying from no noise to large noise. For example,  when $\sigma_0^2=0.5$, the signal-to-noise ratio $\expect\|X-\mu\|^2/\var(\varepsilon)$ is about 2.    The performance of $\hat{\sigma}_0^2$  is assessed by the root mean squared error (RMSE), defined by $$\mathrm{RMSE}=\sqrt{\frac{1}{N}\sum_{i=1}^N|\hat{\sigma}_0^2-\sigma_0^2|^2},$$ where $N$ is the number of independent simulation replicates, which we set to 100. For the purpose of comparison, we also computed the PACE estimate of \citet{Yao2005a} and the estimate proposed by \citet{Liu2009}, denoted by LM, using the 
{fdapace} R package \citep{fdapace} that is available in the comprehensive R archive network (CRAN). The bandwidth $h_\mu$ and $h_\sigma$, as well as those in \cite{Yao2005a} and \cite{Liu2009}, were selected by five-fold cross-validation. The tuning parameter $h_0$ was selected by the empirical rule $h_0=0.29\hat\delta\|\hat\varsigma\|_2(nm^2)^{-1/5}$ that is described in Section \ref{sec:sigma0}. The simulation results are summarized in Table \ref{tab:sig-sparse-1} for the sparse design with mean function $\mu_1$, as well as Tables S.1--S.3 for the dense design and mean function $\mu_2$ in Supplementary Material, where SNPT denotes our  method proposed in Section \ref{sec:sigma0}. We observe that in almost all cases, SNPT performs significantly better than the other two methods. The results also demonstrate the effectiveness of the proposed empirical selection rule for the tuning parameter $h_0$.

To evaluate the performance of the estimators for the covariance structure, we considered two levels of signal-to-noise ratio (SNR), namely, $\mathrm{SNR}=2$ and $\mathrm{SNR}=4$. The performance of estimators for the variance function and the covariance function 
is  evaluated by the root mean integrated squared error (RMISE), defined by $$\mathrm{RMISE}=\sqrt{\frac{1}{N}\sum_{i=1}^N \int_{\tdomain}|\hat{\sigma}^2_X(t)-\sigma^2_X(t)|^2\diffop t}$$
for the variance function 
and 
$$\mathrm{RMISE}=\sqrt{\frac{1}{N}\sum_{i=1}^N \int_{\tdomain}\int_{\tdomain}|\hat{\covarop}(s,t)-\covarop(s,t)|^2\diffop s\diffop t}$$
for the covariance function. 
We compared four methods. The first two, denoted by SNPTM and SNPTF, are our semi-parametric approach with the correlation given in \eqref{eq:matern} and  \eqref{eq:basis-expansion}, respectively. For the SNPTF method, the dimension $d_n$ of \eqref{eq:basis-expansion} is selected via five-fold cross-validation. It is noted that SNPTM and SNPTF yield the same estimates of the variance function but different estimates of the correlation structure.  The third one, denoted by PFBE (penalized Fourier basis expansion), is the method proposed by \cite{Lin2019a}, and the last one, denoted by PACE, is the approach invented by \cite{Yao2005a}.

For the estimation of the variance function $\sigma_X^2(t)$, the results are summarized in Table \ref{tab:var-sparse-1} for the sparse design and mean function $\mu_1$, and also in Tables S.4--S.6 in Supplementary Material for the dense design and mean function $\mu_2$. In these tables, the results of SNPTF are not reported since they are the same as the results of SNPTM.  We observe that, in all cases, SNPTM and PFBE substantially outperform PACE. For the dense design, the methods SNPTM and PFBE yield comparable results. The SNPTM method performs better than PFBE when $n=200$ in most cases, except in the setting III which favors the PFBE method. This suggests that the SNPTM method, which adopts the local linear smoothing strategy combined with our estimator for the variance of the noise, generally converges faster as the sample size grows.

For the estimation of the covariance function $\covarop$, we summarize the results in Table \ref{tab:cov-sparse-1} for the sparse design and mean function $\mu_1$, and in Tables S.7--S.9 in Supplementary Material for the dense design and mean function $\mu_2$.   As expected, in all cases, SNPTM, SNPTF and PFBE substantially outperform PACE, since PACE is not designed to process functional snippets. Among the estimators SNPTM, SNPTF and PFBE, in the setting I, SNPTM outperforms the others since in this case the model is correctly specified for SNPTM, in the setting II, SNPTF is the best since the model is correctly specified for SNPTF, and in the setting III, PFBE has a favorable performance. Although there is no universally best estimator, overall these three estimators have comparable performance. To select a method in practice, one can first produce a scatter plot of the raw covariance function. If the function appears to decay monotonically as the point $(s,t)$ moves away from the diagonal, then SNPT with a monotonic decaying correlation such as SNPTM is recommended. Otherwise, SNPT with a general correlation structure such as SNPTF or the PFBE approach might be adopted.

\begin{table}[h!]
	\renewcommand{\arraystretch}{0.8}
	\caption{RMSE and their standard errors for $\hat{\sigma}_{0}^{2}$ under the sparse design and $\mu_1$ }
	\begin{center}
	\begin{tabular}{|c|c|c|c|c|c|}
		\cline{4-6}  \cline{5-6} \cline{6-6}
		\multicolumn{3}{c|}{} & \multicolumn{3}{c|}{method}\tabularnewline
		\hline
		Cov & $n$ & $\sigma_0^2$ & SNPT & PACE & LM\tabularnewline
		\hline
		\hline
		\multirow{8}{*}{I} & \multirow{4}{*}{50} & 0  & 0.012 (0.009) & 0.144 (0.166) & 0.129 (0.203) \tabularnewline
		\cline{3-6}
		& & 0.1  & 0.029 (0.038) & 0.129 (0.146) & 0.186 (0.197) \tabularnewline
		\cline{3-6}
		& & 0.25  & 0.050 (0.056) & 0.147 (0.185) & 0.117 (0.125) \tabularnewline
		\cline{3-6}
		& & 0.5  & 0.100 (0.135) & 0.181 (0.195) & 0.157 (0.131) \tabularnewline
		\cline{2-6}
		& \multirow{4}{*}{200} & 0  & 0.009 (0.005) & 0.080 (0.103) & 0.073 (0.077) \tabularnewline
		\cline{3-6}
		& & 0.1  & 0.017 (0.019) & 0.091 (0.098) & 0.144 (0.150) \tabularnewline
		\cline{3-6}
		& & 0.25  & 0.032 (0.038) & 0.086 (0.097) & 0.093 (0.127) \tabularnewline
		\cline{3-6}
		& & 0.5  & 0.049 (0.064) & 0.098 (0.118) & 0.165 (0.106) \tabularnewline
		\hline
		\multirow{8}{*}{II} & \multirow{4}{*}{50} & 0  & 0.036 (0.030) & 0.252 (0.245) & 0.219 (0.255) \tabularnewline
		\cline{3-6}
		& & 0.1  & 0.047 (0.052) & 0.254 (0.285) & 0.237 (0.255) \tabularnewline
		\cline{3-6}
		& & 0.25  & 0.087 (0.133) & 0.241 (0.244) & 0.159 (0.151) \tabularnewline
		\cline{3-6}
		& & 0.5  & 0.128 (0.202) & 0.238 (0.260) & 0.126 (0.134) \tabularnewline
		\cline{2-6}
		& \multirow{4}{*}{200} & 0  & 0.024 (0.015) & 0.177 (0.172) & 0.192 (0.200) \tabularnewline
		\cline{3-6}
		& & 0.1  & 0.027 (0.027) & 0.185 (0.179) & 0.176 (0.174) \tabularnewline
		\cline{3-6}
		& & 0.25  & 0.042 (0.050) & 0.177 (0.177) & 0.097 (0.097) \tabularnewline
		\cline{3-6}
		& & 0.5  & 0.071 (0.084) & 0.174 (0.182) & 0.124 (0.089) \tabularnewline
		\hline
		\multirow{8}{*}{III} & \multirow{4}{*}{50} & 0  & 0.004 (0.004) & 0.099 (0.103) & 0.028 (0.064) \tabularnewline
		\cline{3-6}
		& & 0.1  & 0.024 (0.029) & 0.102 (0.106) & 0.099 (0.127) \tabularnewline
		\cline{3-6}
		& & 0.25  & 0.049 (0.063) & 0.093 (0.109) & 0.077 (0.080) \tabularnewline
		\cline{3-6}
		& & 0.5  & 0.094 (0.130) & 0.113 (0.146) & 0.172 (0.128) \tabularnewline
		\cline{2-6}
		& \multirow{4}{*}{200} & 0  & 0.002 (0.002) & 0.065 (0.077) & 0.009 (0.023) \tabularnewline
		\cline{3-6}
		& & 0.1  & 0.010 (0.012) & 0.066 (0.067) & 0.049 (0.075) \tabularnewline
		\cline{3-6}
		& & 0.25  & 0.027 (0.033) & 0.068 (0.071) & 0.069 (0.067) \tabularnewline
		\cline{3-6}
		& & 0.5  & 0.059 (0.071) & 0.067 (0.073) & 0.163 (0.091) \tabularnewline
		\hline
	\end{tabular}
\end{center}
	\label{tab:sig-sparse-1}
\end{table}

\begin{table}
	\renewcommand{\arraystretch}{0.8}
	\caption{RMISE and their standard errors for $\hat{\sigma}^{2}_X(t)$ under the sparse design and $\mu_1$}
	\begin{center}
	\begin{tabular}{|c|c|c|c|c|c|}
		\cline{4-6}  \cline{5-6} \cline{6-6}
		\multicolumn{3}{c|}{} & \multicolumn{3}{c|}{method}\tabularnewline
		\hline
		Cov & SNR & $n$ & SNPTM & PFBE & PACE\tabularnewline
		\hline
		\hline
		\multirow{4}{*}{I} & \multirow{2}{*}{2} & 50  & 0.535 (0.218) & 0.518 (0.211) & 2.133 (1.536) \tabularnewline
		\cline{3-6}
		& & 200  & 0.339 (0.130) & 0.330 (0.118) & 1.344 (1.126) \tabularnewline
		\cline{2-6}
		& \multirow{2}{*}{4} & 50  & 0.531 (0.199) & 0.517 (0.229) & 1.845 (1.461) \tabularnewline
		\cline{3-6}
		& & 200  & 0.313 (0.136) & 0.334 (0.127) & 1.151 (0.952) \tabularnewline
		\hline
		\multirow{4}{*}{II} & \multirow{2}{*}{2} & 50  & 0.775 (0.396) & 0.743 (0.214) & 2.602 (1.747) \tabularnewline
		\cline{3-6}
		& & 200  & 0.509 (0.163) & 0.530 (0.141) & 1.699 (1.045) \tabularnewline
		\cline{2-6}
		& \multirow{2}{*}{4} & 50  & 0.768 (0.303) & 0.734 (0.351) & 2.510 (1.578) \tabularnewline
		\cline{3-6}
		& & 200  & 0.471 (0.162) & 0.507 (0.149) & 1.515 (1.056) \tabularnewline
		\hline
		\multirow{4}{*}{III} & \multirow{2}{*}{2} & 50  & 0.633 (0.201) & 0.592 (0.136) & 1.478 (1.052) \tabularnewline
		\cline{3-6}
		& & 200  & 0.376 (0.133) & 0.392 (0.107) & 1.178 (0.700) \tabularnewline
		\cline{2-6}
		& \multirow{2}{*}{4} & 50  & 0.592 (0.208) & 0.586 (0.158) & 1.428 (1.166) \tabularnewline
		\cline{3-6}
		& & 200  & 0.350 (0.139) & 0.385 (0.114) & 0.923 (0.451) \tabularnewline
		\hline
	\end{tabular}
\end{center}
	\label{tab:var-sparse-1}
\end{table}

\begin{table}
	\renewcommand{\arraystretch}{0.8}
	\caption{RMISE and their standard errors for $\hat\covarop$ under the sparse design and $\mu_1$}
	\begin{center}
	\begin{tabular}{|c|c|c|c|c|c|c|}
		\cline{4-7}  \cline{5-7} \cline{6-7} \cline{7-7} 
		\multicolumn{3}{c|}{} & \multicolumn{4}{c|}{method}\tabularnewline
		\hline
		Cov & SNR & $n$ & SNPTM & SNPTF & PFBE & PACE\tabularnewline
		\hline
		\hline
		\multirow{4}{*}{I} & \multirow{2}{*}{2} & 50  & 0.339 (0.101) & 0.441 (0.158) & 0.399 (0.156) & 1.470 (0.808) \tabularnewline
		\cline{3-7}
		& & 200  & 0.235 (0.092) & 0.359 (0.089) & 0.295 (0.101) & 1.044 (0.625) \tabularnewline
		\cline{2-7}
		& \multirow{2}{*}{4} & 50  & 0.315 (0.093) & 0.424 (0.135) & 0.371 (0.143) & 1.348 (0.809) \tabularnewline
		\cline{3-7}
		& & 200  & 0.225 (0.084) & 0.341 (0.090) & 0.254 (0.097) & 0.902 (0.513) \tabularnewline
		\hline
		\multirow{4}{*}{II} & \multirow{2}{*}{2} & 50  & 0.556 (0.119) & 0.521 (0.183) & 0.541 (0.160) & 2.061 (1.061) \tabularnewline
		\cline{3-7}
		& & 200  & 0.474 (0.068) & 0.436 (0.132) & 0.465 (0.101) & 1.625 (0.632) \tabularnewline
		\cline{2-7}
		& \multirow{2}{*}{4} & 50  & 0.536 (0.126) & 0.472 (0.148) & 0.517 (0.139) & 2.014 (0.868) \tabularnewline
		\cline{3-7}
		& & 200  & 0.457 (0.063) & 0.419 (0.133) & 0.431 (0.112) & 1.543 (0.604) \tabularnewline
		\hline
		\multirow{4}{*}{III} & \multirow{2}{*}{2} & 50  & 0.503 (0.090) & 0.511 (0.154) & 0.491 (0.130) & 1.248 (0.650) \tabularnewline
		\cline{3-7}
		& & 200  & 0.473 (0.041) & 0.439 (0.092) & 0.366 (0.052) & 1.136 (0.439) \tabularnewline
		\cline{2-7}
		& \multirow{2}{*}{4} & 50  & 0.493 (0.075) & 0.499 (0.120) & 0.487 (0.122) & 1.217 (0.727) \tabularnewline
		\cline{3-7}
		& & 200  & 0.469 (0.055) & 0.423 (0.087) & 0.358 (0.063) & 0.997 (0.316) \tabularnewline
		\hline
	\end{tabular}
\end{center}
	\label{tab:cov-sparse-1}
\end{table}

\section{Application}\label{sec:application}
We applied the proposed method to analyze the longitudinal data that was collected and detailed in \citet{BACHRACH1999}. It consists of longitudinal measurements of spinal bone mineral density for 423 healthy subjects.  The measurement for each individual was observed annually for up to 4 years. Among 423 subjects, we focused on  $n=280$  subjects  ranging in age from 8.8 to 26.2 years who completed at least 2 measurements. A plot for the design of the covariance function is  given in Figure \ref{fig:bone-design}, while a scatter plot for the raw covariance surface is given in Figure \ref{fig:raw-cov}. 
The raw covariance surface seems to decay rapidly to zero as design points move away from the diagonal. This motivated us to estimate the covariance structure with a Mat\'ern correlation function. This method is referred to as SNPTM. In addition, we also used the more flexible $d_n$-Fourier basis family to see whether a better fit can be achieved, where $d_n=2$ was selected by Akaike information criterion (AIC). Such approach is denoted by SNPTF. 

The estimated variance of the measurement error is $1.5\times 10^{-3}$ by the method proposed in Section \ref{sec:sigma0}, $10^{-6}$ by PACE and $7.8\times 10^{-7}$ by LM, respectively.  The estimates of the covariance surface are depicted in Figure \ref{fig:cov}. We observe that, the estimates produced by SNPTM and SNPTF are similar in the diagonal region, while visibly differ in the off-diagonal region. For this dataset, the upward off-diagonal parts of the estimated covariance surface by SNPTF seem artificial, so we recommend the SNPTM estimate for this data. For the PACE estimate, due to the missing data in the off-diagonal region and insufficient observations at two ends of the diagonal region, it suffers from significant boundary effect.

The mean function estimated by SNPTM\footnote{SNPTM, SNPTF and PACE use the same method to estimate the mean function.} shown in the left panel of Figure \ref{fig:mean-var} and found similar to its counterpart in \cite{Lin2019a}, suggests that the spinal bone mineral density increases rapidly from age 9 to age 16, and then slows down afterward. The mineral density has the largest  variation around age 14, indicated by the variance function estimated by SNPTM\footnote{SNPTM and SNPTF use the same method to estimate the variance function.} and shown in the middle panel of Figure \ref{fig:mean-var}. As a comparison, the PACE estimate, shown in the right panel of Figure \ref{fig:mean-var}, suffers from the boundary effect that is passed from the PACE estimate of the covariance function, because the PACE method estimates the variance function by the diagonal of the estimated covariance function.

\begin{figure}
\begin{center}
\includegraphics[width=0.5\textwidth]{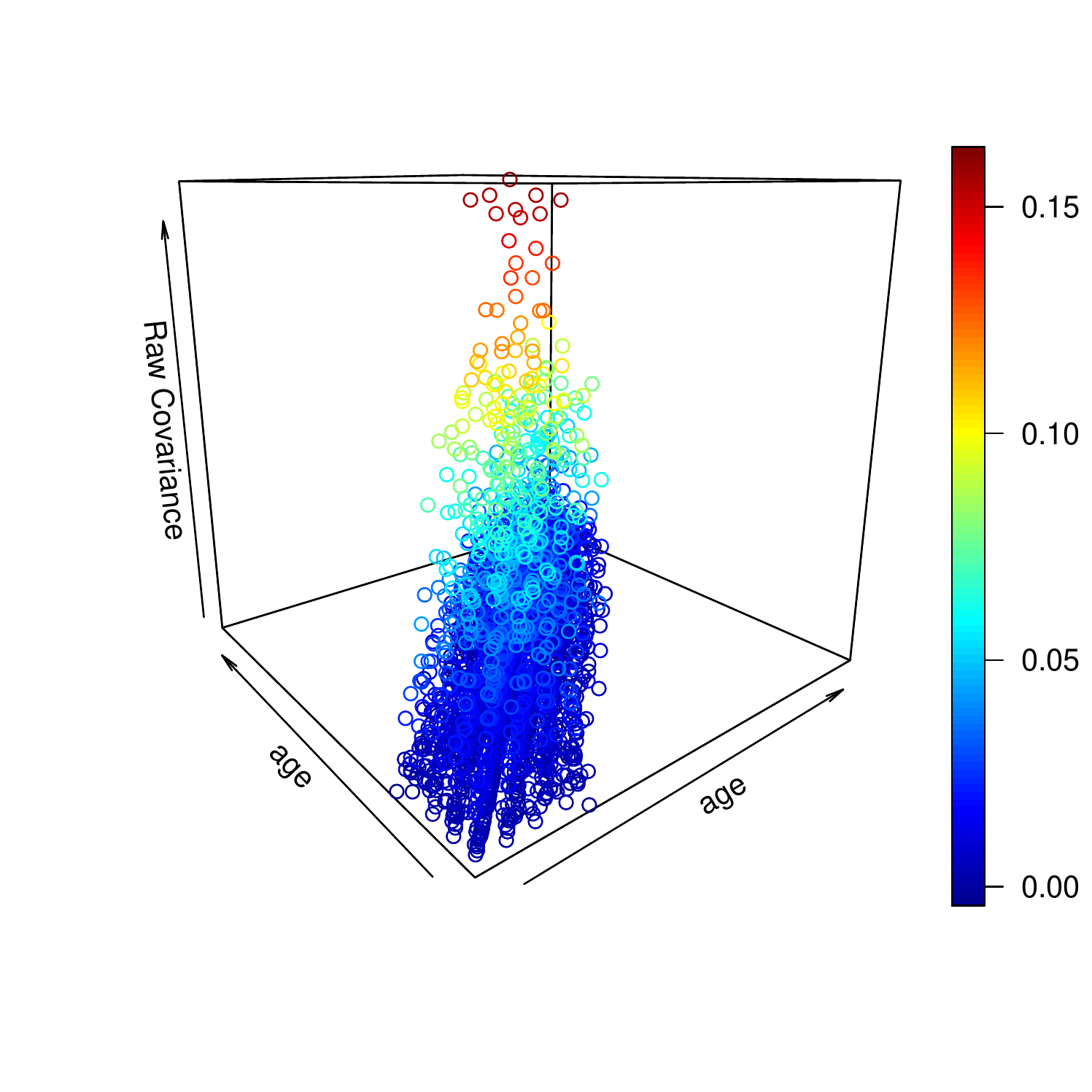}
\end{center}
\vspace{-1cm}
\caption{Scatter plot of the raw covariance function of the spinal bone mineral density data.}
\label{fig:raw-cov}
\end{figure}

\begin{figure}
	\begin{center}
	\begin{minipage}{0.32\textwidth}
			\includegraphics[scale=0.37]{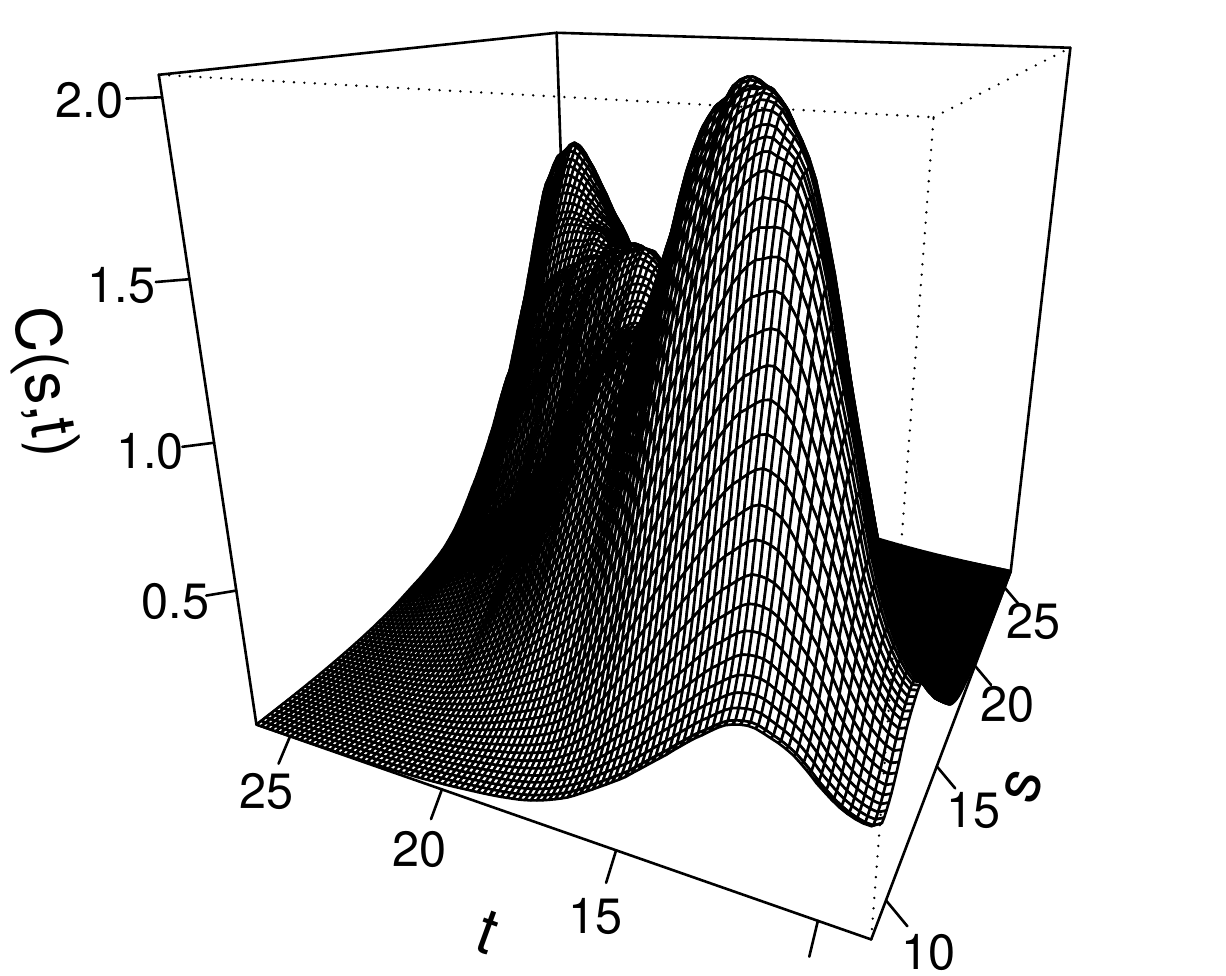}
		\end{minipage}
		\begin{minipage}{0.32\textwidth}
			\includegraphics[scale=0.37]{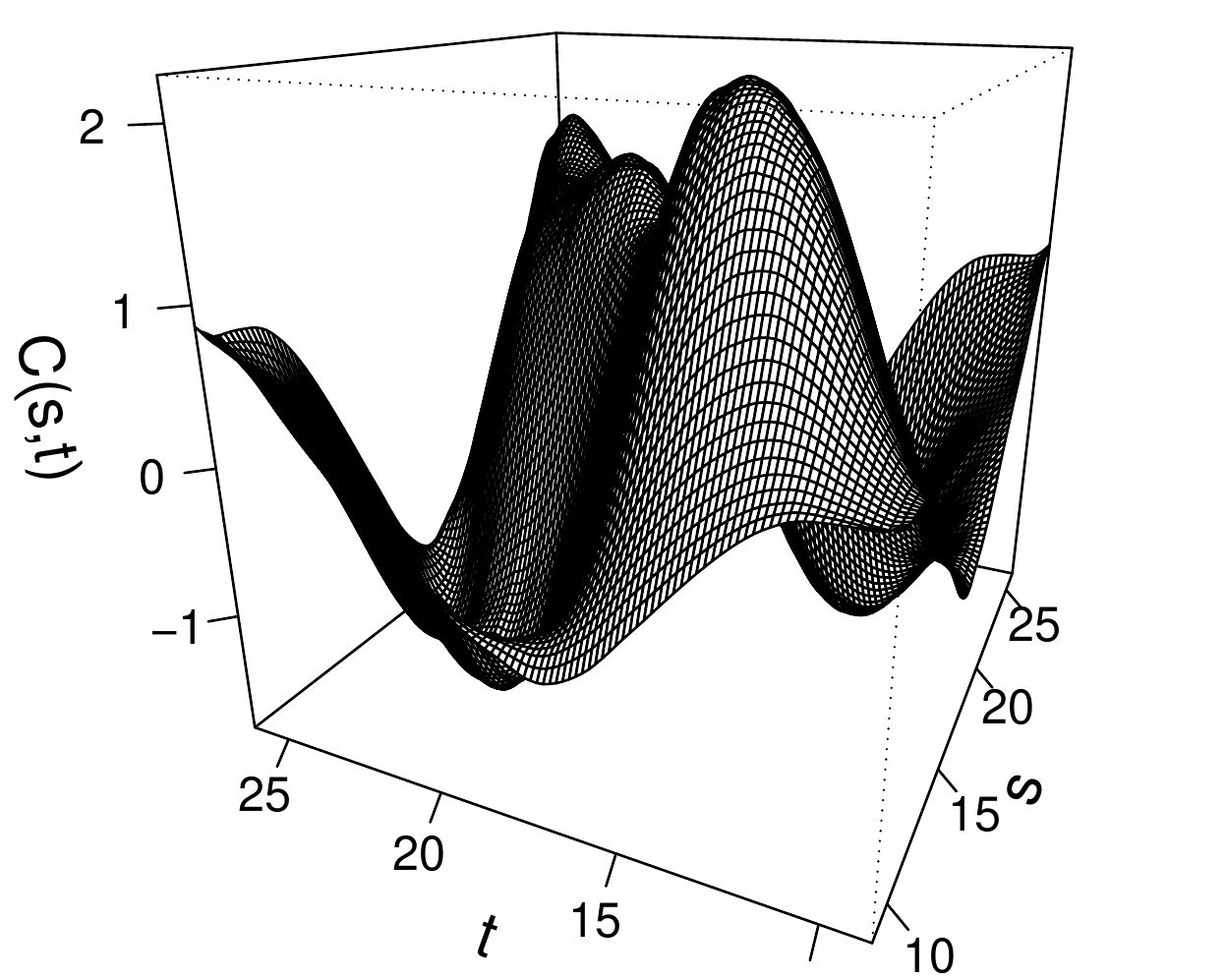}
		\end{minipage}
	\begin{minipage}{0.32\textwidth}
		\includegraphics[scale=0.37]{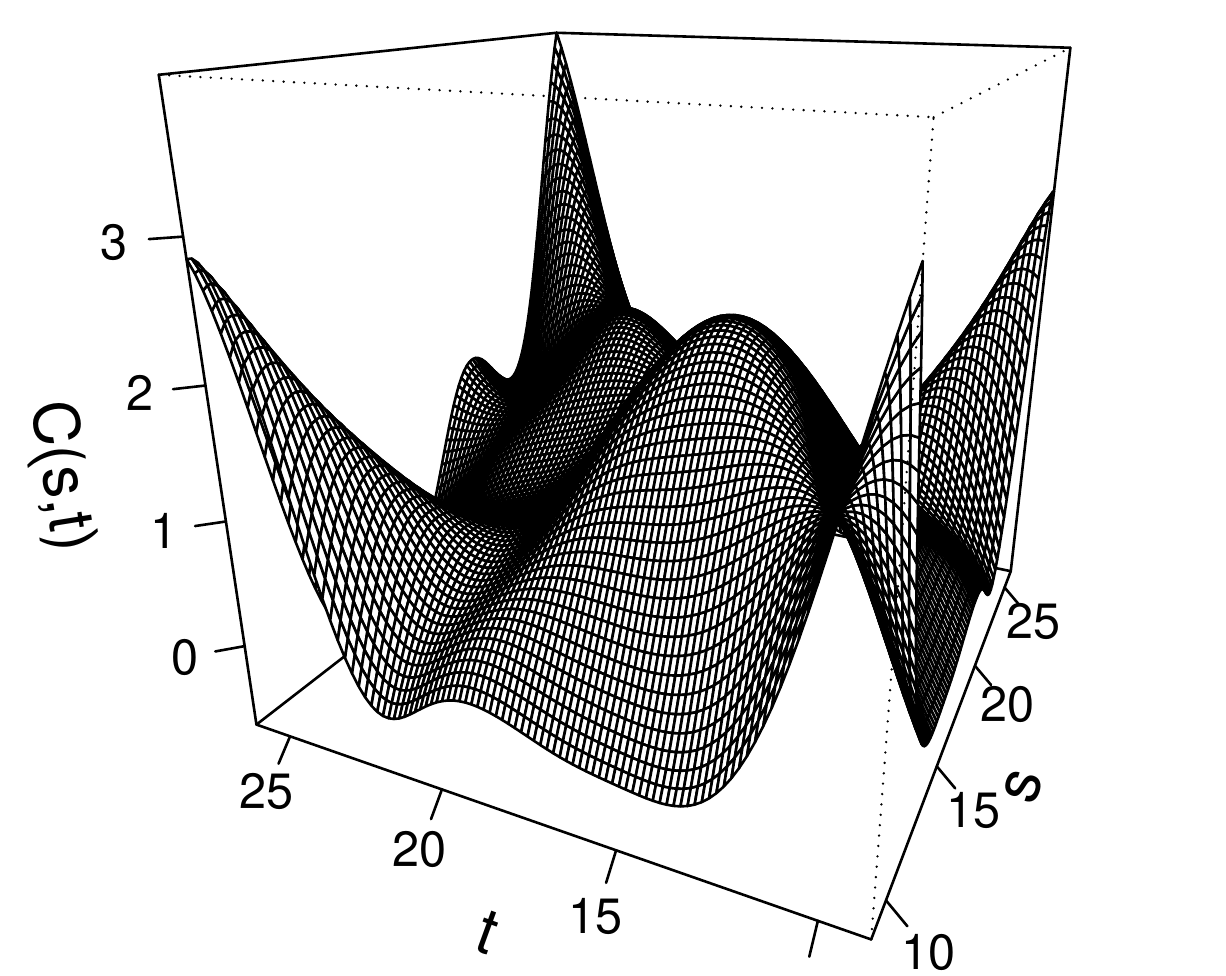}
	\end{minipage}

	\end{center}
	\caption{The estimated covariance functions by SNPTM (left), SNPTF (middle) and PACE (right). The $z$-axis is scaled by $10^{-2}$ for visualization.}
	\label{fig:cov}
\end{figure}

\begin{figure}
\begin{center}
\begin{minipage}{0.32\textwidth}
\includegraphics[scale=0.4]{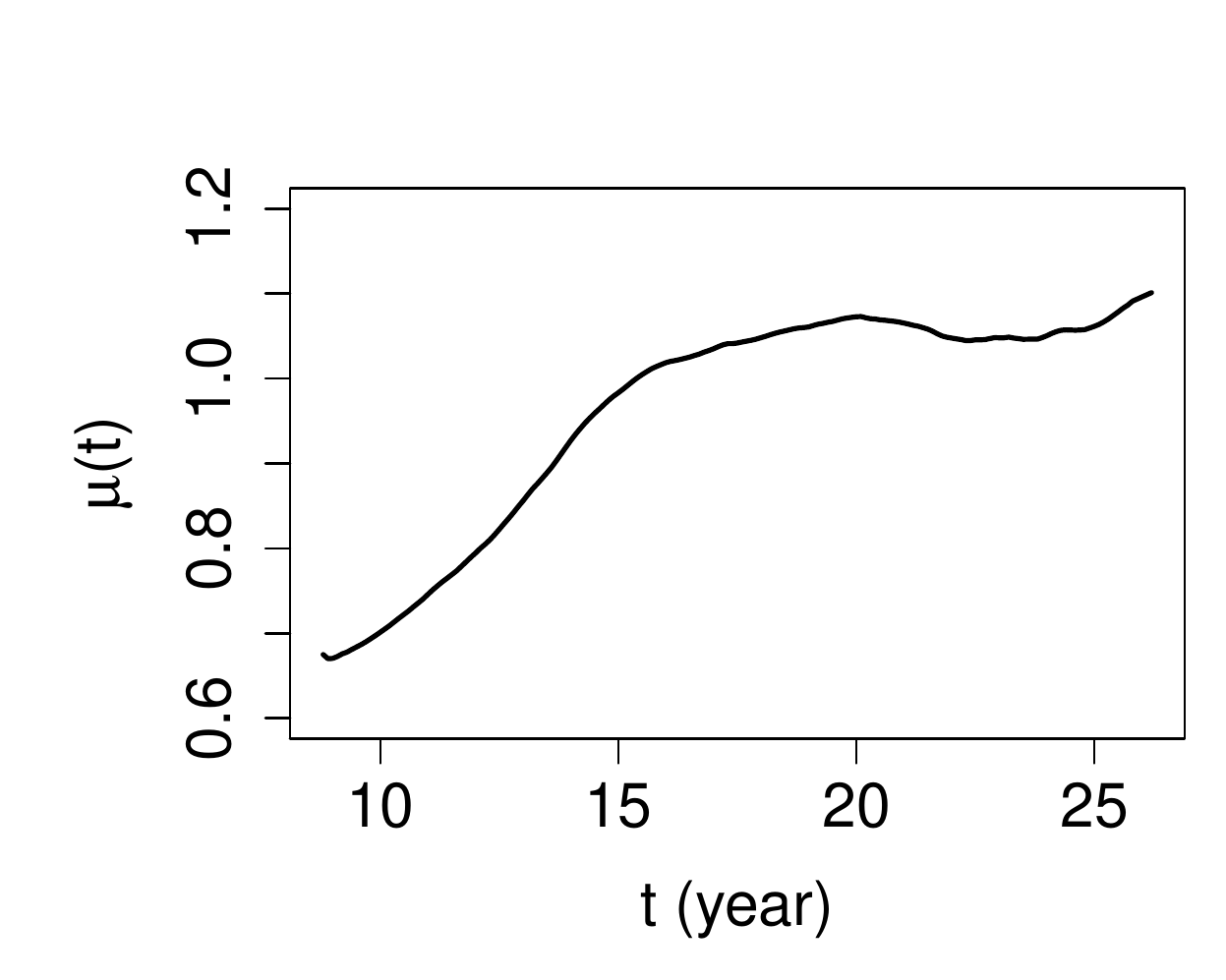}
\end{minipage}
\begin{minipage}{0.32\textwidth}
\includegraphics[scale=0.4]{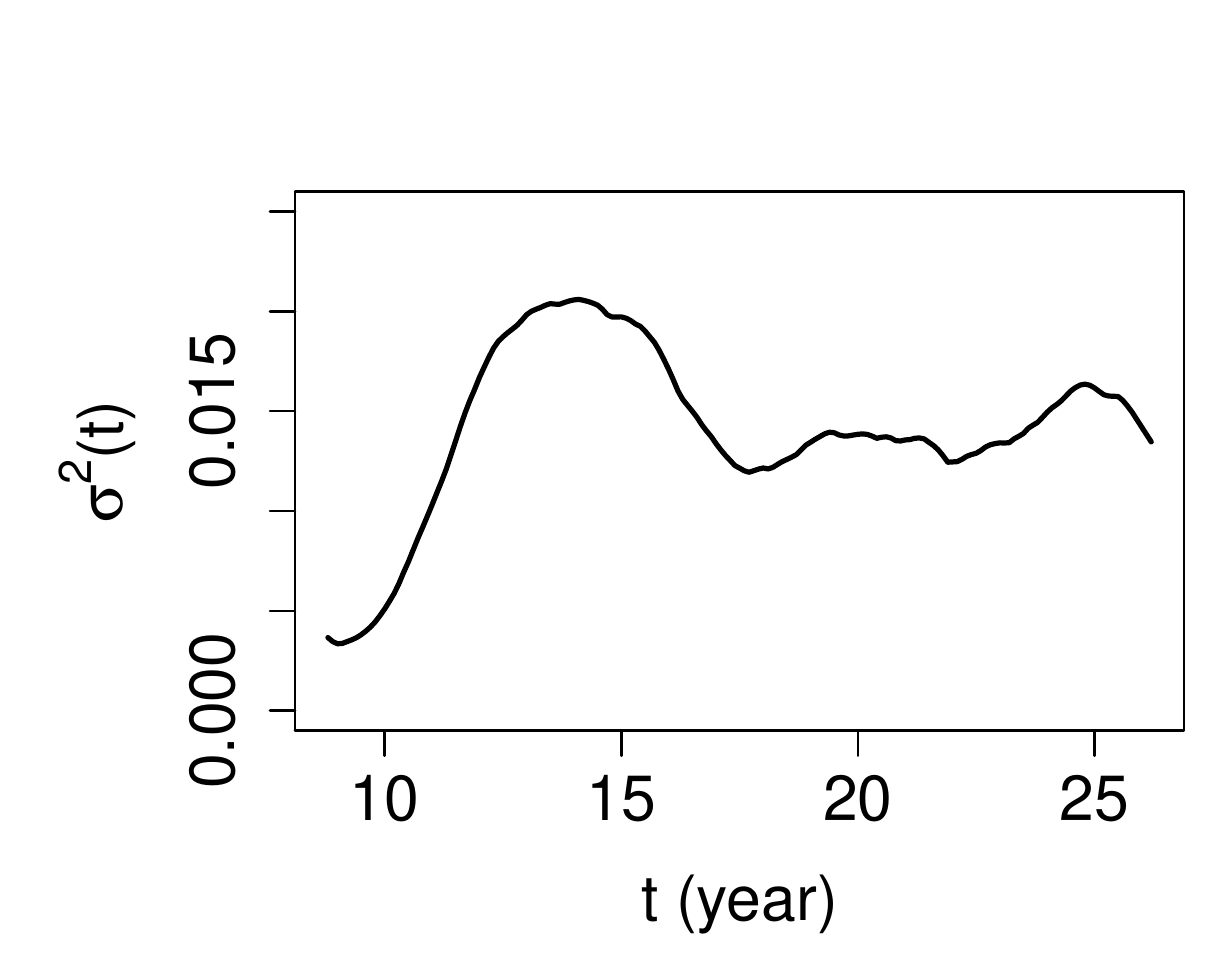}
\end{minipage}
\begin{minipage}{0.32\textwidth}
\includegraphics[scale=0.4]{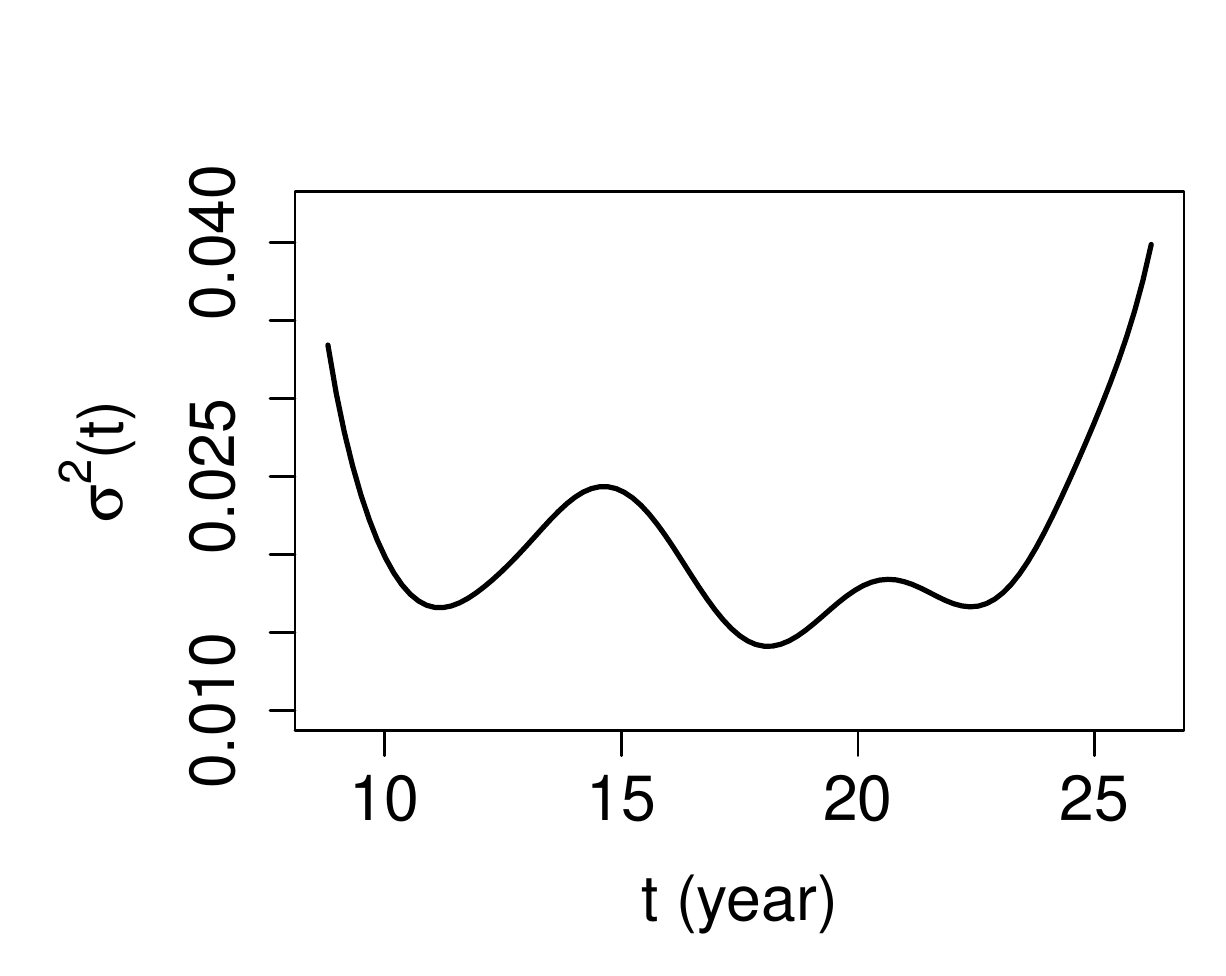}
\end{minipage}
\end{center}
\caption{The estimated mean function (left), the estimated variance function by SNPTM and SNPTF (middle), and the estimated variance function by PACE (right).}
\label{fig:mean-var}
\end{figure}

\section{Concluding Remarks}\label{sec:conclusion}

In this paper, we consider the mean and covariance estimation for functional snippets.  The estimation of the mean function is still an interpolation problem so previous approaches based on local smoothing methods still work, except that the theory needs a little adjustment to reflect the new design of functional snippets.  However, the estimation of the covariance function is quite different because it is now an extrapolation problem rather an interpolation problem, so previous approaches based on local smoothing do not work anymore.  We propose a hybrid approach that   leverages  the available information  and structure of the correlation in the diagonal band to estimate the correlation function parametrically but the variance function nonparametrically. Because the dimension of the parameters can grow with the sample size, the approach is very flexible and can be made nearly nonparametric for the final covariance estimate.   

An interesting feature of the algorithm is that it reverses the order of estimation for the variance components, compared to existing approaches for non-snippets functional data, by first estimating the noise variance $\sigma_0$, then estimating the variance function $\sigma_X^2(t)$, followed by  the fitting of the correlation function.  The estimation of the covariance function is performed at the very end when all other components have been estimated. The proposed approach differs substantially from traditional approaches, such as PACE \citep{Yao2005a}, which estimate the covariance function first, from there  the variance function is obtained as a byproduct through the diagonal elements of the covariance estimate, while the noise variance  is estimated at the very end.  The new procedure to estimate the noise variance is both simpler and better than the PACE estimates.  Thus, even if the data are non-snippet types, one can use the new  method proposed in Section \ref{sec:sigma0} to estimate the noise variance.

We emphasize that, although the proposed method targets functional snippets, it is also applicable to functional fragments or functional data in which each curve consists of multiple disjoint snippets. In addition, the theory presented in Section \ref{sec:theory} can be slightly modified to accommodate such data. In contrast, methods designed for nonsnippet functional data are generally not applicable to functional snippets, due  to the reasons  discussed in Section \ref{sec:introduction}. In practice, one might distinguish between functional snippets and nonsnippets  by the design plot like Figure \ref{fig:bone-design}. If the support points cover the entire region, then the data are of the nonsnippet type.  Otherwise they are functional snippets. However, there might be some case that it is unclear whether the entire region is fully covered by support points, especially when data are sparsely observed. In such situation,  snippet-based methods,  such as the proposed one, is a  safer option.

Reliable estimates of the mean and covariance functions are fundamental to the analysis of functional data.  They are also the building blocks of functional regression methods and functional hypothesis test procedures. 
 The proposed estimators for the mean and covariance of functional snippets together provide a stepping stone to future study on regression and inference that are specific to functional snippets.
 
 \section*{Supplementary Material}
 The online supplementary material contains additional simulation results, as well as information for implementation of the proposed method in the \texttt{R} package \texttt{mcfda}\footnote{https://github.com/linulysses/mcfda.}.

\appendix

\section*{Appendix}
\subsection*{Selection of $h_0$}
The constant $0.29$ in the empirical rule $h_0=0.29\hat\delta\|\hat\varsigma\|_2(nm^2)^{-1/5}$ presented in Section \ref{sec:sigma0} was determined by  optimizing $\sum\{\hat h-c\hat\delta\|\hat\varsigma\|_2(nm^2)^{-1/5}\}^2$ over $c\in\real$, where the summation is taken over the combinations of various parameters. Specifically, for each tuple $(n,m,\delta,\sigma_{0}^2,\covarop)$,  
we generated a batch of $G=100$ independent datasets of $n$ centered Gaussian snippets with the covariance function $\covarop$. Each snippet was recorded at $m$ random points from a random subinterval of length $\delta$ in $[0,1]$. For each batch of datasets, we found $\hat h$ to minimize $\sum_{r=1}^{G} \{\hat\sigma_{0,r}^2(\hat h)-\sigma_{0}^2\}^2$, where $\hat\sigma_{0,r}^2(\hat h)$ is the estimate of $\sigma_{0}^2$ based the $r$th dataset in the batch and by using the proposed method with the bandwidth $\hat h$. We also obtained the quantities $\hat\delta=G^{-1}\sum_{r=1}^G \hat\delta_{r}$ and $\|\hat\varsigma\|_2=G^{-1}\sum_{r=1}^G \|\hat\varsigma_r\|_2$, where $\hat\delta_r$ and $\hat\varsigma_r$ are the estimate of $\delta$ and $\varsigma$ based on the $r$th dataset in the batch, respectively. In this way, we obtain a collection $\mathscr H$ of vectors $(\hat h,n,m,\hat\delta,\|\hat\varsigma\|_2)$. Finally, we found $c=0.29$ to minimize $\sum\{\hat h-c\hat\delta\|\hat\varsigma\|_2(nm^2)^{-1/5}\}^2$, where the summation is taken over the collection $\mathscr H$.

In the above process, the covariance function $\covarop$ was taken from a collection composed by 1) covariance functions whose correlation part is the correlation function listed in Section \ref{sec:method} with various values of the parameters and whose variance functions are exponential functions, squared sin/cos functions and positive polynomials, 2) covariance functions  $\covarop(s,t)=a\min\{s,t\}$ with various values of $a>0$, 3)  covariance functions $\covarop(s,t)=\sum_{k=1}^K ak^{-\lambda}\phi_k(s)\phi_k(t)$ with various values of $a>0$, $\lambda>0$ and $K\geq 1$, where the functions $\phi_k$ are the Fourier basis functions described in Section \ref{sec:simulation}, and 4) covariance functions $\covarop(s,t)=\sum_{1\leq j,k\leq K} ae^{-b|j-k|}$  with various choices of $a>0$, $b>0$ and $K\geq 1$.

\subsection*{Technical Lemmas}

\begin{lemma}\label{lem:A0-A1-A2-B}~
\begin{enumerate}
[leftmargin=*,labelsep=2mm,label= \textup{(\alph*)}]
\setcounter{enumi}{0}%
\item\label{lem:enum:A1} Under conditions \ref{cond:A1}--\ref{cond:A2}, one has $A_{2}=A_{1}+O(h_0^{3})$.
\item\label{lem:enum:B}With condition \ref{cond:A1}, 
$\expect(\hat{B}-B)^{2}=O(n^{-1}m^{-2}h_0+n^{-1}m^{-1}h_0^{2})$.
\item\label{lem:enum:A0-A1}Under conditions \ref{cond:A1}--\ref{cond:A3}, $\expect\{(\hat{A}_0-\hat{A}_1)-(A_0-A_1)\}^{2}=O(h_0^6+n^{-1}m^{-2}h_0+n^{-1}h_0^{2})$. If $\expect L_X^4<\infty$ is also assumed, then $\expect\{(\hat{A}_0-\hat{A}_1)-(A_0-A_1)\}^{2}=O(h_0^6+n^{-1}m^{-2}h_0+n^{-1}m^{-1}h_0^{2})$.
\end{enumerate}
\end{lemma}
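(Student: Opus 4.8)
The plan is to establish the three parts in order, since part (a) is reused in the bias analysis of part (c), and the variance computation in part (b) serves as the template for part (c).

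For part (a), write $A_2-A_1=\expect[g(T_1,T_2)1_{|T_1-T_2|<h_0}]$ with $g(s,t)=\{\covarop(s,t)-\covarop(s,s)\}+\{\mu(s)\mu(t)-\mu(s)^2\}$, a function vanishing on the diagonal $s=t$. By \ref{cond:A2}, a first-order Taylor expansion in $t$ about $s$ gives $g(s,t)=\psi(s)(t-s)+O((t-s)^2)$ uniformly, where $\psi(s)=\partial_2\covarop(s,s)+\mu(s)\mu'(s)$. Conditioning on $O$ and on $T_1=s$, the remaining design point $T_2$ has density $f_0(\cdot-O+\delta/2)$ on $[O-\delta/2,O+\delta/2]$, so the first-order contribution to $A_2-A_1$ is $\expect\big[\psi(T_1)\int_{|t-T_1|<h_0}(t-T_1)f_0(t-O+\delta/2)\,dt\big]$; the inner integral vanishes to leading order by antisymmetry of $t\mapsto t-T_1$ over the symmetric window (up to a boundary layer of width $h_0$ where the window is truncated by the ends of the support), hence, since $f_0'$ is Lipschitz by \ref{cond:A1}, it is $O(h_0^3)$. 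The quadratic remainder is dominated by a constant times $\expect[(T_2-T_1)^2 1_{|T_1-T_2|<h_0}]=O(h_0^3)$. Thus $A_2-A_1=O(h_0^3)$.

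For part (b), note that $\hat B$ is exactly unbiased, so $\expect(\hat B-B)^2=\var(\hat B)=n^{-1}\var(U_1)$ by independence across subjects, with $U_1=\{m(m-1)\}^{-1}\sum_{j\neq l}1_{|T_{1j}-T_{1l}|<h_0}$. Expand $\var(U_1)$ as a double sum over ordered index pairs $(j,l),(j',l')$ and group by the size of $\{j,l\}\cap\{j',l'\}$. The $O(m^2)$ coincident terms are each at most $\expect1_{|T_{1j}-T_{1l}|<h_0}=B=O(h_0)$; the $O(m^3)$ one-shared-index terms have covariance $O(h_0^2)$, because conditional on the shared design point and on $O_1$ the two indicators are independent with conditional probabilities $O(h_0)$; and the $O(m^4)$ disjoint-index terms vanish identically, because conditional on $O_1$ the two indicators are independent and each has conditional probability $\prob(|T_{11}-T_{12}|<h_0\mid O_1)$, which does \emph{not} depend on $O_1$ — the conditional law of a design point given $O_1=u$ is a fixed density $f_0$ merely translated by $u$, so a window of length $h_0$ always captures the same mass. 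Dividing by $\{m(m-1)\}^2$ yields $\var(U_1)=O(m^{-2}h_0+m^{-1}h_0^2)$, which is the claim.

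For part (c), first symmetrize exactly as in the derivation \eqref{eq:sigma0-positive}, obtaining $\hat A_0-\hat A_1=\frac{1}{2n}\sum_i\{m_i(m_i-1)\}^{-1}\sum_{j\neq l}(Y_{ij}-Y_{il})^2\,1_{|T_{ij}-T_{il}|<h_0}$, a difference-based variance estimator, and split the mean squared error into squared bias and variance. Since $\expect[(Y_{ij}-Y_{il})^2\mid T_{ij},T_{il}]=\expect[(X(T_{ij})-X(T_{il}))^2\mid T_{ij},T_{il}]+2\sigma_0^2$, and the first term is a nonnegative $C^2$ function vanishing with its gradient on the diagonal and so $O((T_{ij}-T_{il})^2)$ by \ref{cond:A2}, the bias of $\hat A_0-\hat A_1$ for $A_0-A_1=\sigma_0^2B$ equals $\tfrac12\expect[O((T_{ij}-T_{il})^2)\,1_{|T_{ij}-T_{il}|<h_0}]=O(h_0^3)$ by the computation of part (a); hence the squared bias is $O(h_0^6)$. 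For the variance, write $(Y_{ij}-Y_{il})^2=(\varepsilon_{ij}-\varepsilon_{il})^2+2(\varepsilon_{ij}-\varepsilon_{il})(X_i(T_{ij})-X_i(T_{il}))+(X_i(T_{ij})-X_i(T_{il}))^2$ and bound the variance of each of the three averaged sums separately by the pair-grouping of part (b). The pure-noise sum is handled exactly as in part (b), now using $\expect\varepsilon^4<\infty$ for the fourth moment of the noise increment: disjoint covariances vanish (errors are independent and the indicator probabilities are free of $O_i$), one-overlap covariances are $O(h_0^2)$, and coincident terms are $O(h_0)$, contributing $O(n^{-1}m^{-2}h_0+n^{-1}m^{-1}h_0^2)$. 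The cross term has mean zero and second moments $O(h_0^3)$ — by Cauchy--Schwarz and $\expect[(X_i(T_{ij})-X_i(T_{il}))^2\mid T]=O((T_{ij}-T_{il})^2)$ — so it is of strictly smaller order. The signal-increment sum is where the hypothesis on $L_X$ enters: conditioning on $(X_i,O_i)$, and for one-overlap pairs also on the shared design point, reduces every covariance to the variance of a conditional expectation; under \ref{cond:A3} this conditional expectation is pathwise at most a fixed multiple of $\|X_i\|_\infty^2\,\prob(|T_{ij}-T_{il}|<h_0\mid O_i)=O(h_0)$, so each disjoint covariance is $O(h_0^2)$ and, summing the $O(m^4)$ such terms and dividing by $n$, this contributes the term $n^{-1}h_0^2$; if instead $\expect L_X^4<\infty$, the increment is bounded by $L_{X_i}h_0$ on the window, so the same conditional expectation is $O(h_0^3)L_{X_i}^2$ and the whole sum is $O(n^{-1}h_0^6)$, negligible, leaving $O(n^{-1}m^{-2}h_0+n^{-1}m^{-1}h_0^2)$. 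Combining squared bias and variance gives the two displayed bounds.

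The delicate part is the variance bookkeeping in part (c): correctly enumerating the overlap cases for each of the three pieces of $(Y_{ij}-Y_{il})^2$, tracking the powers of $m$ and $h_0$, and pinpointing exactly where $\expect L_X^4<\infty$ sharpens the rate. The single structural fact that makes everything go through — used in both (b) and (c) — is that, because the conditional density of a design point given $O_i=u$ is a fixed template translated by $u$, the conditional pair-probability $\prob(|T_{i1}-T_{i2}|<h_0\mid O_i)$ is a deterministic constant; this annihilates every disjoint-pair covariance and thereby removes what would otherwise be an $O(h_0^2)$-per-term, $O(m^4)$-count contribution dominating the final bound. Everything else reduces to Taylor expansion against a symmetric window and routine moment estimates.
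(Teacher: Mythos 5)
Your proposal is correct and takes essentially the same route as the paper's proof: a Taylor expansion against the symmetric window (with the boundary layer handled separately) for part (a), exact unbiasedness plus overlap-case bookkeeping for part (b) with the translation-invariance of the conditional pair-probability killing the disjoint-pair covariances, and for part (c) the symmetrized difference form, an $O(h_0^3)$ bias, and an overlap/variance analysis in which the noise, cross, and signal increments are treated separately and $\expect L_X^4<\infty$ sharpens only the signal part (the paper's $I_1,\ldots,I_4$ decomposition in its Case 1). The lone quibble is your closing summary: the constancy of the conditional pair-probability annihilates disjoint-pair covariances only for $\hat{B}$ and the pure-noise terms, not for the signal increments --- which is precisely why the unsharpened bound carries the $n^{-1}h_0^{2}$ term --- though your detailed argument in part (c) already handles this correctly.
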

\begin{proof}
To show $A_2=A_1+O(h_0^3)$ in part \ref{lem:enum:A1}, we define $\tdomain_{h_0,\delta}=\{(s,t,u):u\in[\delta/2,1-\delta/2],\:u-\delta/2\leq s,t\leq u+\delta/2,\:|s-t|<h_0\}$ and $g(s,t,u)=\{\covarop(s,t)+\mu(s)\mu(t)\}f_{T|O}(s|u)f_{T|O}(t|u)f_{O}(u)$.
Let $g_{s}$ be the partial derivative of $g$ with respect to $s$.
Then, $g_{s}$ is Lipschitz continuous given condition \ref{cond:A1} and \ref{cond:A2}. With $t^{\ast}$ denoting a real number satisfying $\min(s,t)\leq t^{\ast}\leq\max(s,t)$,
one has
\begin{align*}
A_{2} & =\iiint_{\tdomain_{h_0,\delta}}[g(t,t,u)+g_{s}(t,t,u)(s-t)+\{g_{s}(t^{\ast},t,u)-g_{s}(t,t,u)\}(s-t)^{2}\}]\diffop s\diffop t\diffop u\\
 & =A_{1}+\iiint_{\tdomain_{h_0,\delta}}g_{s}(t,t,u)(s-t)\diffop s\diffop t\diffop u+O(h_0^{3}) =A_{1}+O(h_0^{3}),
\end{align*}
where the last equality is obtained by observing that 
\begin{align*}
\iiint_{\tdomain_{h_0,\delta}}g_{s}(t,t,u)(s-t)\diffop s\diffop t\diffop u= & \int_{\delta/2}^{1-\delta/2}\int_{u-\delta/2+h_0}^{u+\delta/2-h_0}\int_{t-h_0}^{t+h_0}g_{s}(t,t,u)(s-t)\diffop s\diffop t\diffop u\\
 & +\int_{\delta/2}^{1-\delta/2}\int_{u-\delta/2}^{u-\delta/2+h_0}\int_{\max(u-\delta/2,t-h_0)}^{\min(u+\delta/2,t+h_0)}g_{s}(t,t,u)(s-t)\diffop s\diffop t\diffop u\\
 & +\int_{\delta/2}^{1-\delta/2}\int_{u+\delta/2-h_0}^{u+\delta/2}\int_{\max(u-\delta/2,t-h_0)}^{\min(u+\delta/2,t+h_0)}g_{s}(t,t,u)(s-t)\diffop s\diffop t\diffop u\\
= & 0+O(h_0^{3})+O(h_0^{3})=O(h_0^{3}).
\end{align*}

For part \ref{lem:enum:B}, it is seen that $\expect\hat{B}=B$ and
\begin{align}
\expect(\hat{B}-B)^{2} & =\expect\left[\frac{1}{n}\sum_{i=1}^{n}\frac{1}{m(m-1)}\sum_{j\neq l}1_{|T_{ij}-T_{il}|<h_0}-B\right]^{2}\nonumber\\
& =\frac{1}{n}\expect\left[\frac{1}{m(m-1)}\sum_{j\neq l}1_{|T_{ij}-T_{il}|<h_0}-B\right]^{2}.\label{eq:pf-lem-1}
\end{align}
Now we first observe that $\expect(1_{|T_{ij}-T_{il}|<h_{0}}\mid O_{i})=B$,
since 
\begin{align*}
\expect(1_{|T_{ij}-T_{il}|<h_{0}}\mid O_{i}) & =\iint_{\stackrel{|s-t|<h_{0}}{O_{i}-\delta/2\leq s,t\leq O_{i}+\delta/2}}f_{T|O}(s|O_{i})f_{T|O}(t|O_{i})\diffop s\diffop t\\
 & =\iint_{\stackrel{|s-t|<h_{0}}{O_{i}-\delta/2\leq s,t\leq O_{i}+\delta/2}}f_{0}(s-O_{i}+\delta/2)f_{0}(t-O_{i}+\delta/2)\diffop s\diffop t\\
 & =\iint_{\stackrel{|s-t|<h_{0}}{0\leq s,t\leq\delta}}f_{0}(s)f_{0}(t)\diffop s\diffop t
\end{align*}
and 
\begin{align*}
B=\expect1_{|T_{ij}-T_{il}|<h_{0}} & =\expect\expect(1_{|T_{ij}-T_{il}|<h_{0}}\mid O_{i})=\iint_{\stackrel{|s-t|<h_{0}}{0\leq s,t\leq\delta}}f_{0}(s)f_{0}(t)\diffop s\diffop t.
\end{align*}
Therefore, if $j,l,p,q$ are all distinct, then
\begin{align*}
 & \expect\{(1_{|T_{ij}-T_{il}|<h_{0}}-B)(1_{|T_{ip}-T_{iq}|<h_{0}}-B)\}\\
 & =\expect\expect\{(1_{|T_{ij}-T_{il}|<h_{0}}-B)(1_{|T_{ip}-T_{iq}|<h_{0}}-B)\mid O_{i}\}\\
 & =\expect\{\expect(1_{|T_{ij}-T_{il}|<h_{0}}-B\mid O_{i})\expect(1_{|T_{ip}-T_{iq}|<h_{0}}-B\mid O_{i})\} =0.
\end{align*}
It is relatively straightforward to show that if $j=p$ but $l\neq q$
or $j=q$ but $l\neq p$, then $\expect\{(1_{|T_{ij}-T_{il}|<h_{0}}-B)(1_{|T_{ip}-T_{iq}|<h_{0}}-B)\}=O(h_{0}^{2})$,
and if $j=p$ and $l=q$ or $j=q$ and $l=p$, then $\expect\{(1_{|T_{ij}-T_{il}|<h_{0}}-B)(1_{|T_{ip}-T_{iq}|<h_{0}}-B)\}=O(h_{0})$.
Assembling the above results, one has
\begin{align*}
\expect\left[\frac{1}{m(m-1)}\sum_{j\neq l}1_{|T_{ij}-T_{il}|<h_{0}}-B\right]^{2} & =O(m^{-2}h_{0}+m^{-1}h_{0}^{2}),
\end{align*}
which together with \eqref{eq:pf-lem-1} implies the conclusion of part \ref{lem:enum:B}.

For part \ref{lem:enum:A0-A1}, with the aid of part \ref{lem:enum:A1}, it is straightforward to see that \begin{equation}\label{eq:pf-lem-bias}
\expect\{(\hat{A}_0-\hat{A}_1)-(A_0-A_1)\}=O(h_0^3).
\end{equation}
Now we shall calculate the variance of $\hat{A}_0-\hat{A}_1$. With definition $E_{0} =\expect(Y_{ij}-Y_{il})^{2}1_{|T_{ij}-T_{il}|<h_{0}}$, one derives
\begin{align}
& \mathrm{Var}(\hat{A}_{0}-\hat{A}_{1}) \nonumber \\ \nonumber
& =\mathrm{Var}\left(\frac{1}{n}\sum_{i=1}^{n}\frac{1}{m(m-1)}\sum_{j\neq l}\frac{(Y_{ij}-Y_{il})^{2}}{2}1_{|T_{ij}-T_{il}|<h_{0}}\right)\\ \nonumber
 & =\frac{1}{4n}\mathrm{Var}\left(\frac{1}{m(m-1)}\sum_{j\neq l}(Y_{ij}-Y_{il})^{2}1_{|T_{ij}-T_{il}|<h_{0}}\right)\\
 & =\frac{1}{4n}\left(\frac{1}{m^{2}(m-1)^{2}}\sum_{j\neq l}\sum_{p\neq q}\expect\{(Y_{ij}-Y_{il})^{2}1_{|T_{ij}-T_{il}|<h_{0}}-E_{0}\}\{(Y_{ip}-Y_{iq})^{2}1_{|T_{ip}-T_{iq}|<h_{0}}-E_{0}\}\right)\nonumber\\
 & \equiv\frac{1}{4n}\left(\frac{1}{m^{2}(m-1)^{2}}\sum_{j\neq l}\sum_{p\neq q}V(j,l,p,q)\right).\label{eq:pf-lem-3}
\end{align}
Below we derive bounds for the term $V(j,l,p,q)$. \begin{itemize}
\item Case 1: $j$, $l$, $p$ and $q$ are all distinct. In this case,
via straightforward computation, one can show that
$V(j,l,p,q) =\expect\{(Y_{ij}-Y_{il})^{2}1_{|T_{ij}-T_{il}|<h_{0}}\}\{(Y_{ip}-Y_{iq})^{2}1_{|T_{ip}-T_{iq}|<h_{0}}\}-E_{0}^{2}=O(h_{0}^{2})$.
\item Case 2: $j=p$ but $l\neq q$ or $j=q$ but $l\neq p$. Similar
to Case 1, one has $V(j,l,p,q)=O(h_{0}^{2})$.
\item Case 3: $j=p$ and $l=q$ or $j=q$ and $l=p$. In this case, 
\[
V(j,l,p,q)=\expect\{(Y_{ij}-Y_{il})^{4}1_{|T_{ij}-T_{il}|<h_{0}}\}-E_{0}^{2}=O(h_{0}).
\]
\end{itemize}
Based on the above bounds, we have $\mathrm{Var}(\hat{A}_{0}-\hat{A}_{1})=O(n^{-1}h_0^2+n^{-1}m^{-1}h_0^2+n^{-1}m^{-2}h_0)=O(n^{-1}h_0^2+n^{-1}m^{-2}h_0)$. Together with the bias given in \eqref{eq:pf-lem-bias}, this implies the first statement of part \ref{lem:enum:A0-A1}.

For the second statement of part \ref{lem:enum:A0-A1}, we observe that with condition $\expect L_X^4<\infty$, the bound in Case 1 can be sharpened in the following way. First, we see that \begin{align*}
E_{0} & =\expect\{X_{i}(T_{ij})-X_{i}(T_{il})\}^{2}1_{|T_{ij}-T_{il}|<h_{0}}+\expect(\varepsilon_{ij}-\varepsilon_{il})^{2}1_{|T_{ij}-T_{il}|<h_{0}} =E_{1}+2\sigma_{0}^{2}B,
\end{align*}
where $E_{1}=\expect\{X_{i}(T_{ij})-X_{i}(T_{il})\}^{2}1_{|T_{ij}-T_{il}|<h_{0}}$. Then,
we decompose $V(j,l,p,q)$ into $I_1+I_2+I_3+I_4$, where 
\begin{align*}
I_{1} & =\expect[\{X(T_{ij})-X(T_{il})\}^{2}1_{|T_{ij}-T_{il}|<h_{0}}-E_{1}][\{X(T_{ip})-X(T_{iq})\}^{2}1_{|T_{ip}-T_{iq}|<h_{0}}-E_{1}],\\
I_{2} & =\expect[\{X(T_{ij})-X(T_{il})\}^{2}1_{|T_{ij}-T_{il}|<h_{0}}-E_{1}][(\varepsilon_{ip}-\varepsilon_{iq})^{2}1_{|T_{ip}-T_{iq}|<h_{0}}-2\sigma_{0}^{2}B],\\
I_{3} & =\expect[(\varepsilon_{ij}-\varepsilon_{il})^{2}1_{|T_{ij}-T_{il}|<h_{0}}-2\sigma_{0}^{2}B][\{X(T_{ip})-X(T_{iq})\}^{2}1_{|T_{ip}-T_{iq}|<h_{0}}-E_{1}],\\
I_{4} & =\expect[(\varepsilon_{ij}-\varepsilon_{il})^{2}1_{|T_{ij}-T_{il}|<h_{0}}-2\sigma_{0}^{2}B][(\varepsilon_{ip}-\varepsilon_{iq})^{2}1_{|T_{ip}-T_{iq}|<h_{0}}-2\sigma_{0}^{2}B].
\end{align*}
For $I_{2}$, one can show that 
\begin{align*}
I_{2} & =\expect\expect\left([\{X(T_{ij})-X(T_{il})\}^{2}1_{|T_{ij}-T_{il}|<h_{0}}-E_{1}][(\varepsilon_{ip}-\varepsilon_{iq})^{2}1_{|T_{ip}-T_{iq}|<h_{0}}-2\sigma_{0}^{2}B]\mid O_{i}\right)\\
 & =\expect\left(\expect[\{X(T_{ij})-X(T_{il})\}^{2}1_{|T_{ij}-T_{il}|<h_{0}}-E_{1}\mid O_{i}]\expect[(\varepsilon_{ip}-\varepsilon_{iq})^{2}1_{|T_{ip}-T_{iq}|<h_{0}}-2\sigma_{0}^{2}B\mid O_{i}]\right)\\
 & =0,
\end{align*}
where the first equality is due to the assumption that $T_{i1},\ldots,T_{im}$
are i.i.d. conditional on $O_{i}$, and the second one is based on
the following observation
\begin{align*}
\expect[(\varepsilon_{ip}-\varepsilon_{iq})^{2}1_{|T_{ip}-T_{iq}|<h_{0}}-2\sigma_{0}^{2}B\mid O_{i}] & =2\sigma_{0}^{2}\expect(1_{|T_{ip}-T_{iq}|<h_{0}}\mid O_{i})-2\sigma_{0}^{2}B =2\sigma_{0}^{2}B-2\sigma_{0}^{2}B=0,
\end{align*}
where we recall that $\expect(1_{|T_{ij}-T_{il}|<h_{0}}\mid O_{i})=B$.
Similarly, $I_{3}=0$ and $I_{4}=0$. For $I_{1}$, one can show that
\begin{align*}
|I_{1}| & =|\expect[\{X(T_{ij})-X(T_{il})\}^{2}1_{|T_{ij}-T_{il}|<h_{0}}-E_{1}][\{X(T_{ip})-X(T_{iq})\}^{2}1_{|T_{ip}-T_{iq}|<h_{0}}-E_{1}]|\\
 & =|\expect[\{X(T_{ij})-X(T_{il})\}^{2}1_{|T_{ij}-T_{il}|<h_{0}}\{X(T_{ip})-X(T_{iq})\}^{2}1_{|T_{ip}-T_{iq}|<h_{0}}]-E_{1}^{2}|\\
 & \leq\expect(L_{X}^{4}|T_{ij}-T_{il}|^{2}|T_{ip}-T_{iq}|^{2}1_{|T_{ij}-T_{il}|<h_{0}}1_{|T_{ip}-T_{iq}|<h_{0}})+E_{1}^{2}\\
 & \leq h_{0}^{4}\expect L_{X}^{4}\expect1_{|T_{ij}-T_{il}|<h_{0}}1_{|T_{ip}-T_{iq}|<h_{0}}+E_{1}^{2}\\
 & =O(h_{0}^{6})+E_{1}^{2},
\end{align*}
where the first inequality is due to the Lipschitz continuity property of sample paths. Again, based 
on such continuity property, one has $E_{1}=\expect\{X_{i}(T_{ij})-X_{i}(T_{il})\}^{2}1_{|T_{ij}-T_{il}|<h_{0}}\leq\expect L_{X}^{2}|T_{ij}-T_{il}|^{2}1_{|T_{ij}-T_{il}|<h_{0}}\leq h_{0}^{2}\expect L_{X}^{2}\expect1_{|T_{ij}-T_{il}|<h_{0}}=O(h_{0}^{3})$.
Therefore, we conclude that $I_{1}=O(h_{0}^{6})$. Together with $I_2=I_3=I_4=0$, this implies that $V(j,l,p,q)=O(h_{0}^{6})$. It further indicates that $\mathrm{Var}(\hat{A}_{0}-\hat{A}_{1})=O(n^{-1}h_0^6+n^{-1}m^{-1}h_0^2+n^{-1}m^{-2}h_0)$. Combined with the bias term in \eqref{eq:pf-lem-bias}, this implies the second statement of part \ref{lem:enum:A0-A1}. 
\end{proof}

\subsection*{Proofs of Main Results}

\begin{proof}[Proof of Proposition \ref{thm:theta}]
 For the moment, we assume $\mu\equiv0$. Denote
\begin{align*}
Q_{n}(\theta) & =\frac{1}{n}\sum_{i=1}^{n}\frac{1}{m(m-1)}\sum_{1\leq j\neq l\leq m}\{\sigma_X(T_{ij})\sigma_X(T_{il})\rho_{\theta}(T_{ij},T_{il})-C_{ijl}\}^{2}.
\end{align*}
Now we show that 
\begin{equation}
\left\Vert \frac{\partial\hat{Q}_{n}}{\partial\theta}-\frac{\partial Q_{n}}{\partial\theta}\right\Vert =\Op\left(\sqrt{\frac{d_{n}a_{n}\log n}{n}}\right),\label{eq:score-function}
\end{equation}
where $a_{n}=(\log n)\{(nm)^{-4/5}+n^{-1}\}$. First, we observe that
\begin{align*}
\frac{\partial\hat{Q}_{n}}{\partial\theta}-\frac{\partial Q_{n}}{\partial\theta} & =I_{1}+I_{2}+I_{3}
\end{align*}
with 
\begin{align*}
I_{1}= & \frac{1}{n}\sum_{i=1}^{n}\frac{1}{m(m-1)}\sum_{1\leq j\neq l\leq m}2\{\sigma_X(T_{ij})\sigma_X(T_{il})\rho_{\theta}(T_{ij},T_{il})-C_{ijl}\}\times\\
& \{\hat{\sigma}_X(T_{ij})\hat{\sigma}_X(T_{il})-\sigma_X(T_{ij})\sigma_X(T_{il})\}\frac{\partial\rho_{\theta}(T_{ij},T_{il})}{\partial\theta},\\
I_{2}= & \frac{1}{n}\sum_{i=1}^{n}\frac{1}{m(m-1)}\sum_{1\leq j\neq l\leq m}2\{\hat{\sigma}_X(T_{ij})\hat{\sigma}_X(T_{il})-\sigma_X(T_{ij})\sigma_X(T_{il})\}\rho_{\theta}(T_{ij},T_{il})\times\\
& \sigma_X(T_{ij})\sigma_X(T_{il})\frac{\partial\rho_{\theta}(T_{ij},T_{il})}{\partial\theta},\\
I_{3}= & \frac{1}{n}\sum_{i=1}^{n}\frac{1}{m(m-1)}\sum_{1\leq j\neq l\leq m}2\{\hat{\sigma}_X(T_{ij})\hat{\sigma}_X(T_{il})-\sigma_X(T_{ij})\sigma_X(T_{il})\}\rho_{\theta}(T_{ij},T_{il})\times\\
 & \{\hat{\sigma}_X(T_{ij})\hat{\sigma}_X(T_{il})-\sigma_X(T_{ij})\sigma_X(T_{il})\}\frac{\partial\rho_{\theta}(T_{ij},T_{il})}{\partial\theta}.
\end{align*}
To derive the rate for $I_{1}$, we define
\begin{align*}
G & =\frac{1}{n}\sum_{i=1}^{n}\frac{1}{m(m-1)}\sum_{1\leq j\neq l\leq m}2\{\sigma_X(T_{ij})\sigma_X(T_{il})\rho_{\theta}(T_{ij},T_{il})-C_{ijl}\}\equiv\frac{1}{n}\sum_{i=1}^{n}G_{i}.
\end{align*}
It can be verified that $\expect G_{i}=0$, and also $\expect G_{i}^{2}<\infty$
given condition \ref{cond:A3} and \ref{cond:B2}. We view each $G_{i}$
as a random linear functional from the space 
$\Lambda_{0}=\{f\in C^{2}(\tdomain):\,\|f\|_{\infty}\leq1\}$,
i.e., 
\[
G_{i}(f)\mapsto\frac{1}{m(m-1)}\sum_{1\leq j\neq l\leq m}2\{\sigma_X(T_{ij})\sigma_X(T_{il})\rho_{\theta}(T_{ij},T_{il})-C_{ijl}\}f(T_{ij},T_{il}),
\]
where $f\in\Lambda_{0}$. Then we follow the same lines of the argument
for Lemma 2 of \citet{Severini1992} to establish that $\sqrt{n}G$
converges to a Gaussian element on the Banach space $C(\Lambda_{0})$
of continuous functions on $\Lambda_{0}$ with the sup norm. On the
other hand, using the same technique of \citet{Zhang2016} for the
uniform convergence of the local linear estimator for the mean function,
we can show that $\sup_{t}|\hat{\sigma}_X(t)-\sigma_X(t)|=\Op(\sqrt{a_{n}})$,
and hence $\sup_{s,t}|\hat{\sigma}_X(s)\hat{\sigma}_X(t)-\sigma_X(s)\sigma_X(t)|=\Op(\sqrt{a_{n}})$.
By condition \ref{cond:B2} that $\partial\rho_{\theta}(s,t)/\partial\theta_{j}$
is uniformly bounded for all $j$, we can deduce that, for sufficiently
large $n$, with probability tending to one, the function $(a_{n}\log n)^{-1/2}f_{j}$
with $f_{j}:(s,t)\mapsto\{\hat{\sigma}_X(s)\hat{\sigma}_X(t)-\sigma_X(s)\sigma_X(t)\}\partial\rho_{\theta}(s,t)/\partial\theta_{j}$
falls into $\Lambda_{0}$ for all $j$. Therefore,
\[
\left\Vert \sqrt{n}G\left(\frac{f_{j}}{\sqrt{a_{n}\log n}}\right)\right\Vert \leq\|\sqrt{n}G\|\left\Vert \frac{f_{j}}{\sqrt{a_{n}\log n}}\right\Vert =\Op(1),
\]
where $\Op$ is uniform for all $j$. Noting that $I_{1}=(Gf_{1},\ldots,Gf_{d_{n}})^{\transpose}$,
one can deduce from the above that 
\[
\|I_{1}\|\leq\sqrt{\sum_{j=1}^{d_{n}}\|Gf_{j}\|^{2}}\leq\sqrt{d_{n}}\max_{1\leq j\leq d_{n}}\|Gf_{j}\|=\Op\left(\sqrt{\frac{d_{n}a_{n}\log n}{n}}\right).
\]
When $\mu\neq0$, an argument similar to the above can also be applied
to handle extra terms induced by the discrepancy between $\hat{\mu}$
and $\mu$, so that we still obtain the same rate as the above. Similar
argument applies to $I_{2}$, and we have $I_{2}=\Op(\sqrt{d_{n}a_{n}\log n}/\sqrt{n})$.
It is easy to see that $I_{3}$ is dominated by the other terms. Together,
we establish (\ref{eq:score-function}). It is seen that $\|\partial Q_{n}/\partial\theta\mid_{\theta=\theta_{0}}\|=\Op(\sqrt{d_{n}/n})$.
Thus, we have 
\begin{align*}
\left\Vert \frac{\partial\hat{Q}_{n}}{\partial\theta}\mid_{\theta=\theta_{0}}\right\Vert & \leq\left\Vert \frac{\partial Q_{n}}{\partial\theta}\mid_{\theta=\theta_{0}}\right\Vert +\left\Vert \left(\frac{\partial\hat{Q}_{n}}{\partial\theta}-\frac{\partial Q_{n}}{\partial\theta}\right)\mid_{\theta=\theta_{0}}\right\Vert\\ & =\Op\left(\sqrt{\frac{d_{n}}{n}}+\sqrt{\frac{d_{n}a_{n}\log n}{n}}\right)=\Op\left(\sqrt{\frac{d_{n}}{n}}\right),
\end{align*}
Straightforward but somewhat tedious calculation can show that
\[
\left\Vert \frac{\partial^{2}\hat{Q}_{n}}{\partial\theta^{2}}\mid_{\theta=\theta_{0}}-\frac{\partial^{2}Q}{\partial\theta^{2}}\mid_{\theta=\theta_{0}}\right\Vert =\Op\left(\frac{d_{n}}{\sqrt{n}}+d_{n}\sqrt{a_{n}}\right)=\Op\left(d_{n}\sqrt{a_{n}}\right)
\]
and
\[
\sup_{\theta}\left|\sum_{|\alpha|=3}v^{\alpha}\frac{\partial^{\alpha}\hat{Q}_{n}(\theta)}{\alpha!}\right|=\Op\left(d_{n}^{3/2}\|v\|^{3}\right).
\]

Now let $\eta_{n}=\sqrt{d_{n}^{1+2\tau}/n}$. By Taylor expansion,
\begin{align*}
D(u) & \equiv\hat{Q}_{n}(\theta_{0}+\eta_{n}u)-\hat{Q}_{n}(\theta_{0})\\
 & =\eta_{n}\left(\frac{\partial\hat{Q}_{n}}{\partial\theta}\mid_{\theta=\theta_{0}}\right)^{\transpose}u+\eta_{n}^{2}u^{\transpose}\left(\frac{\partial^{2}\hat{Q}_{n}}{\partial\theta^{2}}\mid_{\theta=\theta_{0}}\right)u+\eta_{n}^{3}\sum_{|\alpha|=3}u^{\alpha}\frac{\partial^{\alpha}\hat{Q}_{n}}{\alpha!}\mid_{\theta=\theta^{\ast}}\\
 & =\Op\left(\eta_{n}\sqrt{\frac{d_{n}}{n}}\right)\|u\|+\eta_{n}^{2}\lambda_{\min}\left(\frac{\partial^{2}Q}{\partial\theta^{2}}\mid_{\theta=\theta_{0}}\right)\|u\|^{2}+\Op\left(\eta_{n}^{3}d_{n}^{3/2}\right)\|u\|^{3}\\
 & \geq \Op\left(d_{n}^{1+\tau}n^{-1}\right)\|u\|+c_{0}d^{1+\tau}n^{-1}\|u\|^{2}+\op(d^{1+\tau}n^{-1})\|u\|^{3} >0
\end{align*}
for some constant $c_0>0$ and if $\|u\|=c$ for a sufficiently large absolute constant $c>0$. Thus,
$\|\hat{\theta}-\theta_{0}\|=\Op(\eta_{n})=\Op(n^{-1/2}d_{n}^{\tau+1/2})$.
\end{proof}

\bibliographystyle{asa}
\bibliography{snippet}


\newpage

\begin{center}
	{\large\bf Supplementary Material to ``Mean and Covariance Estimation for Functional Snippets''}
\end{center}

\renewcommand\thesection{S.\arabic{section}}
\renewcommand\thetable{S.\arabic{table}}
\setcounter{table}{0}
\spacingset{1.25} 

\section{Implementation}
The proposed method has been implemented in the \texttt{R} package \texttt{mcfda}\footnote{https://github.com/linulysses/mcfda.} for mean and covariance estimation in functional data analysis. To use the package, first apply the following command

\texttt{devtools::install\_github("linulysses/mcfda")} 

\noindent
to install the package. For illustration, we use the following command 

\texttt{D <- synfd::sparse.fd(0, synfd::gaussian.process(), n=100, m=5, delta=0.5)}

\noindent
from the \texttt{synfd}\footnote{https://github.com/linulysses/synfd. Installation of this package can be done by the command \texttt{devtools::install\_github("linulysses/synfd")}.} package to synthesize a snippet sample of size $n=100$ in which on average each snippet is observed at $m=5$ random points on $[0,1]$ and the span of each snippet is no larger than $\delta=0.5$. Now call 

\texttt{cov.obj <- covfunc(D\$t, D\$y, method="SP")}

\noindent
with \texttt{method="SP"} to estimate the covariance function by the proposed method with a default setting in which Mat\'ern correlation function is used. A customized correlation function instead of the default one can be adopted; see the package manual for details. Finally, call

\texttt{cov.hat <- predict(cov.obj, seq(0,1,0.01))}

\noindent
to obtain the estimated covariance function in the grid $\{(s,t):s,t=0,0.01,\ldots,1\}$.

\section{Additional simulation results for $\sigma_0^2$}

\begin{table}
	\renewcommand{\arraystretch}{0.8}
	\caption{RMSE and their standard errors for $\hat{\sigma}^{2}_0$ under the sparse design and $\mu_2$}
	\begin{center}
		\begin{tabular}{|c|c|c|c|c|c|}
			\cline{4-6}  \cline{5-6} \cline{6-6}
			\multicolumn{3}{c|}{} & \multicolumn{3}{c|}{method}\tabularnewline
			\hline
			Cov & $n$ & $\sigma_0^2$ & SNPT & PACE & LM\tabularnewline
			\hline
			\hline
			\multirow{8}{*}{I} & \multirow{4}{*}{50} & 0  & 0.011 (0.008) & 0.138 (0.150) & 0.146 (0.203) \tabularnewline
			\cline{3-6}
			& & 0.1  & 0.024 (0.030) & 0.144 (0.176) & 0.138 (0.145) \tabularnewline
			\cline{3-6}
			& & 0.25  & 0.062 (0.078) & 0.162 (0.173) & 0.129 (0.132) \tabularnewline
			\cline{3-6}
			& & 0.5  & 0.113 (0.145) & 0.173 (0.208) & 0.135 (0.132) \tabularnewline
			\cline{2-6}
			& \multirow{4}{*}{200} & 0  & 0.009 (0.005) & 0.080 (0.100) & 0.069 (0.076) \tabularnewline
			\cline{3-6}
			& & 0.1  & 0.014 (0.018) & 0.091 (0.098) & 0.144 (0.154) \tabularnewline
			\cline{3-6}
			& & 0.25  & 0.028 (0.033) & 0.083 (0.090) & 0.107 (0.105) \tabularnewline
			\cline{3-6}
			& & 0.5  & 0.052 (0.063) & 0.095 (0.107) & 0.139 (0.101) \tabularnewline
			\hline
			\multirow{8}{*}{II} & \multirow{4}{*}{50} & 0  & 0.036 (0.029) & 0.273 (0.272) & 0.223 (0.247) \tabularnewline
			\cline{3-6}
			& & 0.1  & 0.043 (0.049) & 0.238 (0.230) & 0.236 (0.250) \tabularnewline
			\cline{3-6}
			& & 0.25  & 0.078 (0.107) & 0.246 (0.276) & 0.168 (0.193) \tabularnewline
			\cline{3-6}
			& & 0.5  & 0.124 (0.152) & 0.257 (0.279) & 0.114 (0.131) \tabularnewline
			\cline{2-6}
			& \multirow{4}{*}{200} & 0  & 0.024 (0.015) & 0.182 (0.186) & 0.171 (0.164) \tabularnewline
			\cline{3-6}
			& & 0.1  & 0.034 (0.035) & 0.190 (0.184) & 0.188 (0.193) \tabularnewline
			\cline{3-6}
			& & 0.25  & 0.044 (0.053) & 0.184 (0.179) & 0.143 (0.141) \tabularnewline
			\cline{3-6}
			& & 0.5  & 0.067 (0.078) & 0.170 (0.170) & 0.107 (0.081) \tabularnewline
			\hline
			\multirow{8}{*}{III} & \multirow{4}{*}{50} & 0  & 0.003 (0.003) & 0.099 (0.107) & 0.014 (0.030) \tabularnewline
			\cline{3-6}
			& & 0.1  & 0.022 (0.027) & 0.098 (0.107) & 0.114 (0.152) \tabularnewline
			\cline{3-6}
			& & 0.25  & 0.058 (0.075) & 0.121 (0.130) & 0.105 (0.102) \tabularnewline
			\cline{3-6}
			& & 0.5  & 0.092 (0.121) & 0.109 (0.138) & 0.157 (0.127) \tabularnewline
			\cline{2-6}
			& \multirow{4}{*}{200} & 0  & 0.002 (0.001) & 0.063 (0.070) & 0.005 (0.009) \tabularnewline
			\cline{3-6}
			& & 0.1  & 0.010 (0.013) & 0.068 (0.067) & 0.064 (0.094) \tabularnewline
			\cline{3-6}
			& & 0.25  & 0.029 (0.035) & 0.070 (0.075) & 0.078 (0.073) \tabularnewline
			\cline{3-6}
			& & 0.5  & 0.053 (0.066) & 0.070 (0.081) & 0.148 (0.090) \tabularnewline
			\hline
	\end{tabular}
\end{center}
	\label{tab:sig-sparse-2}
\end{table}

\begin{table}
	\renewcommand{\arraystretch}{0.8}
	\caption{RMSE and their standard errors for $\hat{\sigma}^{2}_0$ under the dense design and $\mu_1$}
	\begin{center}\begin{tabular}{|c|c|c|c|c|c|}
			\cline{4-6}  \cline{5-6} \cline{6-6}
			\multicolumn{3}{c|}{} & \multicolumn{3}{c|}{method}\tabularnewline
			\hline
			Cov & $n$ & $\sigma_0^2$ & SNPT & PACE & LM\tabularnewline
			\hline
			\hline
			\multirow{8}{*}{I} & \multirow{4}{*}{50} & 0  & 0.013 (0.003) & 0.052 (0.053) & 0.036 (0.098) \tabularnewline
			\cline{3-6}
			& & 0.1  & 0.014 (0.012) & 0.050 (0.053) & 0.037 (0.052) \tabularnewline
			\cline{3-6}
			& & 0.25  & 0.019 (0.021) & 0.045 (0.044) & 0.039 (0.056) \tabularnewline
			\cline{3-6}
			& & 0.5  & 0.027 (0.032) & 0.043 (0.053) & 0.032 (0.034) \tabularnewline
			\cline{2-6}
			& \multirow{4}{*}{200} & 0  & 0.013 (0.002) & 0.043 (0.035) & 0.021 (0.008) \tabularnewline
			\cline{3-6}
			& & 0.1  & 0.013 (0.008) & 0.039 (0.031) & 0.024 (0.049) \tabularnewline
			\cline{3-6}
			& & 0.25  & 0.015 (0.013) & 0.036 (0.032) & 0.019 (0.025) \tabularnewline
			\cline{3-6}
			& & 0.5  & 0.018 (0.020) & 0.027 (0.027) & 0.015 (0.016) \tabularnewline
			\hline
			\multirow{8}{*}{II} & \multirow{4}{*}{50} & 0  & 0.025 (0.009) & 0.172 (0.122) & 0.075 (0.043) \tabularnewline
			\cline{3-6}
			& & 0.1  & 0.025 (0.016) & 0.177 (0.130) & 0.086 (0.100) \tabularnewline
			\cline{3-6}
			& & 0.25  & 0.030 (0.029) & 0.161 (0.115) & 0.074 (0.061) \tabularnewline
			\cline{3-6}
			& & 0.5  & 0.037 (0.042) & 0.155 (0.127) & 0.060 (0.059) \tabularnewline
			\cline{2-6}
			& \multirow{4}{*}{200} & 0  & 0.026 (0.006) & 0.165 (0.084) & 0.091 (0.042) \tabularnewline
			\cline{3-6}
			& & 0.1  & 0.025 (0.012) & 0.159 (0.078) & 0.091 (0.043) \tabularnewline
			\cline{3-6}
			& & 0.25  & 0.027 (0.020) & 0.152 (0.079) & 0.092 (0.048) \tabularnewline
			\cline{3-6}
			& & 0.5  & 0.030 (0.028) & 0.141 (0.083) & 0.081 (0.053) \tabularnewline
			\hline
			\multirow{8}{*}{III} & \multirow{4}{*}{50} & 0  & 0.003 (0.002) & 0.057 (0.048) & 0.002 (0.001) \tabularnewline
			\cline{3-6}
			& & 0.1  & 0.007 (0.007) & 0.051 (0.047) & 0.041 (0.078) \tabularnewline
			\cline{3-6}
			& & 0.25  & 0.014 (0.016) & 0.046 (0.047) & 0.034 (0.047) \tabularnewline
			\cline{3-6}
			& & 0.5  & 0.026 (0.033) & 0.051 (0.055) & 0.045 (0.043) \tabularnewline
			\cline{2-6}
			& \multirow{4}{*}{200} & 0  & 0.004 (0.002) & 0.053 (0.037) & 0.003 (0.002) \tabularnewline
			\cline{3-6}
			& & 0.1  & 0.006 (0.005) & 0.052 (0.034) & 0.004 (0.005) \tabularnewline
			\cline{3-6}
			& & 0.25  & 0.009 (0.010) & 0.049 (0.037) & 0.010 (0.011) \tabularnewline
			\cline{3-6}
			& & 0.5  & 0.014 (0.015) & 0.037 (0.032) & 0.022 (0.021) \tabularnewline
			\hline
	\end{tabular}\end{center}
	\label{tab:sig-dense-1}
\end{table}

\begin{table}
	\renewcommand{\arraystretch}{0.8}
	\caption{RMSE and their standard errors for $\hat{\sigma}^{2}_0$ under the dense design and $\mu_2$}
	\begin{center}\begin{tabular}{|c|c|c|c|c|c|}
			\cline{4-6}  \cline{5-6} \cline{6-6}
			\multicolumn{3}{c|}{} & \multicolumn{3}{c|}{method}\tabularnewline
			\hline
			Cov & $n$ & $\sigma_0^2$ & SNPT & PACE & LM\tabularnewline
			\hline
			\hline
			\multirow{8}{*}{I} & \multirow{4}{*}{50} & 0  & 0.012 (0.003) & 0.054 (0.050) & 0.019 (0.012) \tabularnewline
			\cline{3-6}
			& & 0.1  & 0.012 (0.011) & 0.048 (0.050) & 0.049 (0.056) \tabularnewline
			\cline{3-6}
			& & 0.25  & 0.017 (0.020) & 0.043 (0.044) & 0.058 (0.067) \tabularnewline
			\cline{3-6}
			& & 0.5  & 0.027 (0.033) & 0.043 (0.049) & 0.034 (0.039) \tabularnewline
			\cline{2-6}
			& \multirow{4}{*}{200} & 0  & 0.012 (0.003) & 0.045 (0.034) & 0.021 (0.008) \tabularnewline
			\cline{3-6}
			& & 0.1  & 0.012 (0.007) & 0.040 (0.033) & 0.031 (0.060) \tabularnewline
			\cline{3-6}
			& & 0.25  & 0.014 (0.013) & 0.034 (0.029) & 0.030 (0.045) \tabularnewline
			\cline{3-6}
			& & 0.5  & 0.017 (0.019) & 0.029 (0.031) & 0.016 (0.017) \tabularnewline
			\hline
			\multirow{8}{*}{II} & \multirow{4}{*}{50} & 0  & 0.024 (0.008) & 0.168 (0.118) & 0.074 (0.042) \tabularnewline
			\cline{3-6}
			& & 0.1  & 0.025 (0.015) & 0.160 (0.123) & 0.082 (0.086) \tabularnewline
			\cline{3-6}
			& & 0.25  & 0.030 (0.027) & 0.161 (0.124) & 0.076 (0.057) \tabularnewline
			\cline{3-6}
			& & 0.5  & 0.037 (0.037) & 0.146 (0.114) & 0.059 (0.057) \tabularnewline
			\cline{2-6}
			& \multirow{4}{*}{200} & 0  & 0.025 (0.006) & 0.163 (0.084) & 0.090 (0.042) \tabularnewline
			\cline{3-6}
			& & 0.1  & 0.025 (0.012) & 0.161 (0.086) & 0.095 (0.043) \tabularnewline
			\cline{3-6}
			& & 0.25  & 0.026 (0.019) & 0.155 (0.084) & 0.092 (0.047) \tabularnewline
			\cline{3-6}
			& & 0.5  & 0.027 (0.027) & 0.142 (0.086) & 0.080 (0.049) \tabularnewline
			\hline
			\multirow{8}{*}{III} & \multirow{4}{*}{50} & 0  & 0.002 (0.002) & 0.057 (0.049) & 0.001 (0.001) \tabularnewline
			\cline{3-6}
			& & 0.1  & 0.007 (0.007) & 0.053 (0.049) & 0.045 (0.054) \tabularnewline
			\cline{3-6}
			& & 0.25  & 0.013 (0.014) & 0.049 (0.050) & 0.042 (0.063) \tabularnewline
			\cline{3-6}
			& & 0.5  & 0.025 (0.031) & 0.050 (0.054) & 0.041 (0.041) \tabularnewline
			\cline{2-6}
			& \multirow{4}{*}{200} & 0  & 0.003 (0.002) & 0.055 (0.034) & 0.003 (0.003) \tabularnewline
			\cline{3-6}
			& & 0.1  & 0.005 (0.005) & 0.054 (0.037) & 0.004 (0.005) \tabularnewline
			\cline{3-6}
			& & 0.25  & 0.006 (0.007) & 0.046 (0.034) & 0.008 (0.010) \tabularnewline
			\cline{3-6}
			& & 0.5  & 0.012 (0.015) & 0.036 (0.034) & 0.023 (0.021) \tabularnewline
			\hline
	\end{tabular}\end{center}
	\label{tab:sig-dense-2}
\end{table}

\newpage
\section{Additional simulation results for $\sigma_X^2(t)$}
\begin{table}[h!]
	\caption{RMISE and their standard errors for $\hat\sigma_X^2(t)$ under the sparse design and $\mu_2$}
	\begin{center}
		\begin{tabular}{|c|c|c|c|c|c|}
			\cline{4-6}  \cline{5-6} \cline{6-6}
			\multicolumn{3}{c|}{} & \multicolumn{3}{c|}{method}\tabularnewline
			\hline
			Cov & SNR & $n$ & SNPTM & PFBE & PACE\tabularnewline
			\hline
			\hline
			\multirow{4}{*}{I} & \multirow{2}{*}{2} & 50  & 0.523 (0.206) & 0.513 (0.222) & 2.012 (1.296) \tabularnewline
			\cline{3-6}
			& & 200  & 0.330 (0.118) & 0.319 (0.114) & 1.274 (0.880) \tabularnewline
			\cline{2-6}
			& \multirow{2}{*}{4} & 50  & 0.517 (0.219) & 0.480 (0.230) & 1.653 (1.261) \tabularnewline
			\cline{3-6}
			& & 200  & 0.315 (0.137) & 0.321 (0.136) & 1.213 (0.798) \tabularnewline
			\hline
			\multirow{4}{*}{II} & \multirow{2}{*}{2} & 50  & 0.759 (0.281) & 0.731 (0.211) & 2.521 (1.546) \tabularnewline
			\cline{3-6}
			& & 200  & 0.495 (0.165) & 0.520 (0.156) & 1.890 (1.382) \tabularnewline
			\cline{2-6}
			& \multirow{2}{*}{4} & 50  & 0.756 (0.261) & 0.726 (0.246) & 2.236 (1.511) \tabularnewline
			\cline{3-6}
			& & 200  & 0.468 (0.178) & 0.492 (0.142) & 1.376 (0.982) \tabularnewline
			\hline
			\multirow{4}{*}{III} & \multirow{2}{*}{2} & 50  & 0.580 (0.178) & 0.555 (0.138) & 1.598 (1.181) \tabularnewline
			\cline{3-6}
			& & 200  & 0.399 (0.234) & 0.412 (0.120) & 0.995 (0.581) \tabularnewline
			\cline{2-6}
			& \multirow{2}{*}{4} & 50  & 0.550 (0.183) & 0.550 (0.126) & 1.089 (0.885) \tabularnewline
			\cline{3-6}
			& & 200  & 0.354 (0.206) & 0.386 (0.145) & 0.953 (0.555) \tabularnewline
			\hline
		\end{tabular}
	\end{center}
	\label{tab:var-sparse-2}
\end{table}

\begin{table}
	\caption{RMISE and their standard errors for $\hat\sigma_X^2(t)$ under the dense design and $\mu_1$}
	\begin{center}\begin{tabular}{|c|c|c|c|c|c|}
			\cline{4-6}  \cline{5-6} \cline{6-6}
			\multicolumn{3}{c|}{} & \multicolumn{3}{c|}{method}\tabularnewline
			\hline
			Cov & SNR & $n$ & SNPTM & PFBE & PACE\tabularnewline
			\hline
			\hline
			\multirow{4}{*}{I} & \multirow{2}{*}{2} & 50  & 0.488 (0.117) & 0.509 (0.245) & 0.588 (0.246) \tabularnewline
			\cline{3-6}
			& & 200  & 0.274 (0.082) & 0.275 (0.087) & 0.344 (0.109) \tabularnewline
			\cline{2-6}
			& \multirow{2}{*}{4} & 50  & 0.480 (0.115) & 0.502 (0.198) & 0.561 (0.221) \tabularnewline
			\cline{3-6}
			& & 200  & 0.264 (0.071) & 0.266 (0.077) & 0.331 (0.094) \tabularnewline
			\hline
			\multirow{4}{*}{II} & \multirow{2}{*}{2} & 50  & 0.665 (0.157) & 0.676 (0.202) & 0.757 (0.254) \tabularnewline
			\cline{3-6}
			& & 200  & 0.393 (0.139) & 0.414 (0.108) & 0.526 (0.125) \tabularnewline
			\cline{2-6}
			& \multirow{2}{*}{4} & 50  & 0.657 (0.147) & 0.649 (0.214) & 0.747 (0.308) \tabularnewline
			\cline{3-6}
			& & 200  & 0.362 (0.118) & 0.396 (0.120) & 0.493 (0.120) \tabularnewline
			\hline
			\multirow{4}{*}{III} & \multirow{2}{*}{2} & 50  & 0.504 (0.125) & 0.490 (0.960) & 0.793 (0.311) \tabularnewline
			\cline{3-6}
			& & 200  & 0.297 (0.096) & 0.238 (0.074) & 0.797 (0.184) \tabularnewline
			\cline{2-6}
			& \multirow{2}{*}{4} & 50  & 0.497 (0.125) & 0.414 (0.235) & 0.797 (0.297) \tabularnewline
			\cline{3-6}
			& & 200  & 0.271 (0.076) & 0.223 (0.070) & 0.786 (0.170) \tabularnewline
			\hline
		\end{tabular}
	\end{center}
	\label{tab:var-dense-1}
\end{table}

\begin{table}
	\caption{RMISE and their standard errors for $\hat\sigma_X^2(t)$ under the dense design and $\mu_2$}
	\begin{center}\begin{tabular}{|c|c|c|c|c|c|}
			\cline{4-6}  \cline{5-6} \cline{6-6}
			\multicolumn{3}{c|}{} & \multicolumn{3}{c|}{method}\tabularnewline
			\hline
			Cov & SNR & $n$ & SNPTM & PFBE & PACE\tabularnewline
			\hline
			\hline
			\multirow{4}{*}{I} & \multirow{2}{*}{2} & 50  & 0.491 (0.158) & 0.521 (0.150) & 0.573 (0.248) \tabularnewline
			\cline{3-6}
			& & 200  & 0.260 (0.083) & 0.259 (0.087) & 0.330 (0.101) \tabularnewline
			\cline{2-6}
			& \multirow{2}{*}{4} & 50  & 0.476 (0.132) & 0.511 (0.146) & 0.551 (0.259) \tabularnewline
			\cline{3-6}
			& & 200  & 0.248 (0.064) & 0.257 (0.089) & 0.322 (0.107) \tabularnewline
			\hline
			\multirow{4}{*}{II} & \multirow{2}{*}{2} & 50  & 0.676 (0.159) & 0.668 (0.193) & 0.748 (0.311) \tabularnewline
			\cline{3-6}
			& & 200  & 0.390 (0.126) & 0.412 (0.109) & 0.524 (0.134) \tabularnewline
			\cline{2-6}
			& \multirow{2}{*}{4} & 50  & 0.654 (0.201) & 0.660 (0.156) & 0.726 (0.259) \tabularnewline
			\cline{3-6}
			& & 200  & 0.371 (0.105) & 0.396 (0.110) & 0.512 (0.124) \tabularnewline
			\hline
			\multirow{4}{*}{III} & \multirow{2}{*}{2} & 50  & 0.505 (0.126) & 0.414 (0.201) & 0.839 (0.377) \tabularnewline
			\cline{3-6}
			& & 200  & 0.290 (0.113) & 0.274 (0.153) & 0.812 (0.215) \tabularnewline
			\cline{2-6}
			& \multirow{2}{*}{4} & 50  & 0.487 (0.149) & 0.398 (0.200) & 0.825 (0.367) \tabularnewline
			\cline{3-6}
			& & 200  & 0.275 (0.077) & 0.245 (0.127) & 0.787 (0.171) \tabularnewline
			\hline
		\end{tabular}
	\end{center}
	\label{tab:var-dense-2}
\end{table}

\newpage

\section{Additional simulation results for $\mathcal{C}$}
\begin{table}[h!]
	\caption{RMISE and their standard errors for $\hat{\mathcal{C}}$ under the sparse design and $\mu_2$}
	\begin{center}\begin{tabular}{|c|c|c|c|c|c|c|}
			\cline{4-7}  \cline{5-7} \cline{6-7} \cline{7-7} 
			\multicolumn{3}{c|}{} & \multicolumn{4}{c|}{method}\tabularnewline
			\hline
			Cov & SNR & $n$ & SNPTM & SNPTF & PFBE & PACE\tabularnewline
			\hline
			\hline
			\multirow{4}{*}{I} & \multirow{2}{*}{2} & 50  & 0.338 (0.111) & 0.443 (0.148) & 0.488 (0.169) & 1.416 (0.673) \tabularnewline
			\cline{3-7}
			& & 200  & 0.237 (0.089) & 0.354 (0.087) & 0.296 (0.082) & 1.015 (0.506) \tabularnewline
			\cline{2-7}
			& \multirow{2}{*}{4} & 50  & 0.319 (0.137) & 0.418 (0.133) & 0.421 (0.162) & 1.269 (0.684) \tabularnewline
			\cline{3-7}
			& & 200  & 0.221 (0.087) & 0.337 (0.084) & 0.271 (0.110) & 0.966 (0.409) \tabularnewline
			\hline
			\multirow{4}{*}{II} & \multirow{2}{*}{2} & 50  & 0.567 (0.124) & 0.519 (0.174) & 0.545 (0.163) & 2.259 (1.290) \tabularnewline
			\cline{3-7}
			& & 200  & 0.481 (0.074) & 0.428 (0.132) & 0.468 (0.114) & 1.706 (0.747) \tabularnewline
			\cline{2-7}
			& \multirow{2}{*}{4} & 50  & 0.542 (0.129) & 0.463 (0.137) & 0.510 (0.176) & 1.973 (1.167) \tabularnewline
			\cline{3-7}
			& & 200  & 0.466 (0.059) & 0.413 (0.110) & 0.432 (0.098) & 1.471 (0.607) \tabularnewline
			\hline
			\multirow{4}{*}{III} & \multirow{2}{*}{2} & 50  & 0.498 (0.085) & 0.492 (0.124) & 0.483 (0.104) & 1.305 (0.721) \tabularnewline
			\cline{3-7}
			& & 200  & 0.479 (0.049) & 0.446 (0.092) & 0.371 (0.061) & 1.137 (0.404) \tabularnewline
			\cline{2-7}
			& \multirow{2}{*}{4} & 50  & 0.485 (0.073) & 0.479 (0.122) & 0.477 (0.104) & 1.217 (0.603) \tabularnewline
			\cline{3-7}
			& & 200  & 0.473 (0.041) & 0.433 (0.083) & 0.363 (0.080) & 1.070 (0.373) \tabularnewline
			\hline
		\end{tabular}
	\end{center}
	\label{tab:cov-sparse-2}
\end{table}

\begin{table}
	\caption{RMISE and their standard errors for $\hat{\mathcal{C}}$ under the dense design and $\mu_1$}
	\begin{center}\begin{tabular}{|c|c|c|c|c|c|c|}
			\cline{4-7}  \cline{5-7} \cline{6-7} \cline{7-7} 
			\multicolumn{3}{c|}{} & \multicolumn{4}{c|}{method}\tabularnewline
			\hline
			Cov & SNR & $n$ & SNPTM & SNPTF & PFBE & PACE\tabularnewline
			\hline
			\hline
			\multirow{4}{*}{I} & \multirow{2}{*}{2} & 50  & 0.308 (0.096) & 0.412 (0.101) & 0.434 (0.132) & 0.582 (0.171) \tabularnewline
			\cline{3-7}
			& & 200  & 0.177 (0.064) & 0.301 (0.064) & 0.260 (0.077) & 0.469 (0.079) \tabularnewline
			\cline{2-7}
			& \multirow{2}{*}{4} & 50  & 0.289 (0.089) & 0.402 (0.079) & 0.423 (0.117) & 0.542 (0.128) \tabularnewline
			\cline{3-7}
			& & 200  & 0.169 (0.052) & 0.284 (0.067) & 0.250 (0.068) & 0.449 (0.078) \tabularnewline
			\hline
			\multirow{4}{*}{II} & \multirow{2}{*}{2} & 50  & 0.528 (0.085) & 0.489 (0.116) & 0.516 (0.154) & 1.069 (0.324) \tabularnewline
			\cline{3-7}
			& & 200  & 0.397 (0.042) & 0.351 (0.085) & 0.382 (0.148) & 1.000 (0.213) \tabularnewline
			\cline{2-7}
			& \multirow{2}{*}{4} & 50  & 0.502 (0.090) & 0.465 (0.119) & 0.499 (0.175) & 1.045 (0.277) \tabularnewline
			\cline{3-7}
			& & 200  & 0.382 (0.035) & 0.341 (0.074) & 0.371 (0.134) & 0.981 (0.207) \tabularnewline
			\hline
			\multirow{4}{*}{III} & \multirow{2}{*}{2} & 50  & 0.471 (0.044) & 0.453 (0.081) & 0.454 (0.146) & 0.949 (0.217) \tabularnewline
			\cline{3-7}
			& & 200  & 0.467 (0.027) & 0.427 (0.052) & 0.336 (0.046) & 1.015 (0.130) \tabularnewline
			\cline{2-7}
			& \multirow{2}{*}{4} & 50  & 0.463 (0.039) & 0.422 (0.085) & 0.393 (0.127) & 0.961 (0.221) \tabularnewline
			\cline{3-7}
			& & 200  & 0.463 (0.026) & 0.402 (0.046) & 0.337 (0.039) & 1.010 (0.130) \tabularnewline
			\hline
		\end{tabular}
	\end{center}
	\label{tab:cov-dense-1}
\end{table}

\begin{table}
	\caption{RMISE and their standard errors for $\hat{\mathcal{C}}$ under the dense design and $\mu_2$}
	\begin{center}\begin{tabular}{|c|c|c|c|c|c|c|}
			\cline{4-7}  \cline{5-7} \cline{6-7} \cline{7-7} 
			\multicolumn{3}{c|}{} & \multicolumn{4}{c|}{method}\tabularnewline
			\hline
			Cov & SNR & $n$ & SNPTM & SNPTF & PFBE & PACE\tabularnewline
			\hline
			\hline
			\multirow{4}{*}{I} & \multirow{2}{*}{2} & 50  & 0.305 (0.104) & 0.405 (0.100) & 0.415 (0.140) & 0.590 (0.146) \tabularnewline
			\cline{3-7}
			& & 200  & 0.173 (0.062) & 0.295 (0.070) & 0.252 (0.072) & 0.477 (0.081) \tabularnewline
			\cline{2-7}
			& \multirow{2}{*}{4} & 50  & 0.294 (0.092) & 0.401 (0.098) & 0.400 (0.138) & 0.570 (0.159) \tabularnewline
			\cline{3-7}
			& & 200  & 0.163 (0.046) & 0.278 (0.054) & 0.238 (0.067) & 0.449 (0.082) \tabularnewline
			\hline
			\multirow{4}{*}{II} & \multirow{2}{*}{2} & 50  & 0.533 (0.086) & 0.519 (0.113) & 0.526 (0.159) & 1.048 (0.305) \tabularnewline
			\cline{3-7}
			& & 200  & 0.420 (0.041) & 0.352 (0.071) & 0.393 (0.154) & 1.005 (0.233) \tabularnewline
			\cline{2-7}
			& \multirow{2}{*}{4} & 50  & 0.523 (0.092) & 0.489 (0.116) & 0.502 (0.186) & 1.023 (0.296) \tabularnewline
			\cline{3-7}
			& & 200  & 0.392 (0.033) & 0.346 (0.065) & 0.378 (0.147) & 0.970 (0.194) \tabularnewline
			\hline
			\multirow{4}{*}{III} & \multirow{2}{*}{2} & 50  & 0.479 (0.047) & 0.446 (0.096) & 0.395 (0.089) & 0.977 (0.277) \tabularnewline
			\cline{3-7}
			& & 200  & 0.464 (0.027) & 0.420 (0.051) & 0.351 (0.152) & 0.926 (0.151) \tabularnewline
			\cline{2-7}
			& \multirow{2}{*}{4} & 50  & 0.482 (0.049) & 0.427 (0.090) & 0.393 (0.085) & 0.968 (0.245) \tabularnewline
			\cline{3-7}
			& & 200  & 0.461 (0.026) & 0.409 (0.049) & 0.337 (0.066) & 0.916 (0.116) \tabularnewline
			\hline
		\end{tabular}
	\end{center}
	\label{tab:cov-dense-2}
\end{table}

\end{document}